%
\documentclass[runningheads]{llncs}
\usepackage[T1]{fontenc}
%
\usepackage{graphicx}
\usepackage{amsmath}
\usepackage{amssymb}
\usepackage{algorithm}
\usepackage{algorithmic}
\usepackage{tikz}
\usepackage{fullpage} 
%
%

\newcommand{\network}{G}
\newcommand{\gamedef}{(\network, (b_i)_{i \in [n]}, (d_i)_{i \in [n]}, (c_i)_{i \in [n]})}
\newcommand{\gamedefreduced}{(\network \setminus m, (c_i)_{i \in [n] \setminus \{m\}}, (b_i)_{i \in [n] \setminus \{m\}}, (d_i)_{i \in [n] \setminus \{m\}})}
\newcommand{\support}{ D(\network, (b_i)_{i \in [n]})}

\newtheorem{assumption}{Assumption}
\newtheorem{observation}{Observation}

\begin{document}
\title{Computation of Nash Equilibria of Attack and Defense Games on Networks}
\titlerunning{Nash Equilibria of Attack and Defense Games on Networks}
%
\author{Stanisław Kaźmierowski\inst{1} \and
Marcin Dziubiński\inst{1}\orcidID{0000-0003-1756-2424}}
\authorrunning{S. Kaźmierowski \and M. Dziubiński}

\institute{University of Warsaw, Faculty of Mathematics, Informatics and Mechanics, Banacha 2, 02-097, Warsaw, Poland 
\email{s.kazmierowski@uw.edu.pl}, \email{m.dziubinski@mimuw.edu.pl} }

\maketitle

\begin{abstract}
We consider the computation of a Nash equilibrium in attack and defense games on networks (Bloch et al.~\cite{paper}). We prove that a Nash Equilibrium of the game can be computed in polynomial time with respect to the number of nodes in the network. We propose an algorithm that runs in $O(n^4)$ time with respect to the number of nodes of the network, $n$.

\keywords{Games on networks \and network interdiction \and Nash Equilibrium}
\end{abstract}

\section{Introduction}

International drug trafficking~\cite{mexican_drugs,smuggling_goods}, disrupting the movement of the enemy troops~\cite{military_interdiction_1,military_interdiction_2}, and terrorist attacks~\cite{nuclear_waste} involves strategic actors interacting over a network of bilateral connections. A class of scenarios of this type consists of a network of defenders (e.g. countries connected by common borders) and an attacker attempting to move an undesirable object (e.g. a bomb or a package of drugs) through a network to a targeted node (e.g. a targeted country). Each defender is interested in his security but investment in protection spills over to subsequent defenders on the potential routes of attack.

In a recent paper, Bloch, Chatterjee, and Dutta~\cite{paper} introduce a game theoretic model that captures such scenarios. In the model, an attacker (node $0$) and $n$ defenders are all connected in a fixed network. The attacker chooses a target node and an attack path in the network from her location (node 0) to the location of a targeted node. In the event of a successful attack, the attacker gains the value assigned to the target node. If successfully attacked, the targeted defender loses his value while every other node on the path remains intact. To prevent potential losses,  every defender can invest in costly protection to increase the probability of stopping a potential attack. An attack can be stopped by every defender on the attack path. Bloch~et~al.~\cite{paper} establish the existence of mixed strategy Nash equilibria (NE) in the model and obtain a partial characterization of the NE as well as a full characterization for the networks that form a line. They prove that the set of nodes attacked with positive probability in NE is unique, and under certain redefinition, the model has a unique NE. They provide a set of non-linear equations describing the strategies in a NE when the set of nodes attacked with positive probability is given. Whether this set can be computed efficiently and, consequently, whether a NE of the model can be computed efficiently, was left an open question.

\paragraph{Our Contribution}

We provide an algorithm for calculating a Nash equilibrium (NE) of the model proposed in~\cite{paper}. We prove that the algorithm runs in polynomial time with respect to the number of players. More in detail, we use the idea of reducing the network by removing the nodes that are not attacked under any NE, while maintaining all of the possible paths of attack. We identify a subset of defending nodes called \textit{linkers} which are the subset of nodes that are never attacked in any NE. After removing the linkers from the network, every node can be reached by the attacker by a path of increasing values for the attacker. Using this observation, and building on the idea for computing NE for linear networks, where the nodes are connected in order of ascending values, presented in~\cite{paper}, we obtain a polynomial time algorithm that finds a NE of the model for any connected network.

\paragraph{Related Work}
The problem of strategic transportation and interception of unwanted traffic through a network is known as the problem of network interdiction. In its classic formulation, the problem involves two players: an evader, who sends the traffic through the network, and the interceptor, whose objective is to stop the traffic. The problem of network interdiction has been studied extensively over the past years. In different models, the evader goal can be minimizing the length of the path from the source to the sink~\cite{shortest_path}, avoiding detection by devices placed by the interdictor on the edges~\cite{detection}, or increasing the flow between the source and the sink~\cite{increasing_flow_1,military_interdiction_2,military_interdiction_1}. The applications of network interdiction problem range from smuggling goods~\cite{mexican_drugs,smuggling_goods} and detecting the nuclear material~\cite{nuclear_waste} to disrupting the enemy troops movement~\cite{military_interdiction_1,military_interdiction_2}. See~\cite{survey_1} and~\cite{survey_2} for excellent surveys of this type of model.

The literature on NE computation is vast, and we restrict attention to the closest related works.~\cite{detection} consider a zero-sum game model where the evader chooses any path on the graph and the interdictor chooses an edge in the graph. If this edge is on the path chosen by the evader, then the evader is detected with a fixed probability assigned to this edge. One of the contributions of~\cite{detection} is reducing the linear programming problem which is a classic method for solving the zero-sum games to a polynomial number of constraints and variables when the considered size of strategy set of evader is of a possibly exponential size.

Closer to the model considered in this paper, is the model~\cite{idd} in which each of the $n$ defenders chooses whether to protect himself or not. The defenders are connected in the directed weighted graph with weights of the edges reflecting the probability of being contaminated by the incoming neighbours in case they are successfully attacked. There are two main settings considered, first where the direct attack is a result of a random event (e.g. a pandemic), and second, where the attacks are coordinated by an attacker with an incentive to maximize the number of infected nodes (e.g. hacker attacks). The authors propose a polynomial time algorithm for computing a NE for the models they consider.

\section{The Model} \label{sec:model}

We consider a game, introduced in~\cite{paper}, between an \emph{attacker} (player 0) and $n$ \emph{defenders} (target nodes). We will use $[n] = \{1,2,\ldots,n\}$ to denote the set of defenders. The attacker and the defenders are connected in a network modeled by an undirected graph, $\network = \langle V,E \rangle,$ where $V = [n] \cup \{0\}$ is the set of nodes and $E \subseteq \binom{V}{2}$ is the set of edges ($\binom{V}{2}$ denotes the set of all $2$-element subsets of $V$). We assume that graph $\network$ is connected, meaning that every target is reachable from the attacker by some simple path $p$ (i.e. a path where every node appears at most once).\footnote{Throughout the paper when using the term path we will mean a simple path.} Given a graph $\network$, we will use $E(\network)$ to denote the set of edges in $\network$ and $V(\network)$ to denote the set of nodes in $\network$. 

The attacker attacks a selected defender in the network by reaching him through a path starting at $0$ and ending at the defender. Each defender $j \in [n]$ has a value $b_j > 0$ describing his strategic importance to the attacker. If defender $j$ is successfully attacked, the attacker receives a payoff of $b_j$. 

If attacked successfully, the defender $j$ obtains a negative payoff of $-d_j$. Each defender, anticipating a possible attack, can invest in protection (interception probability). Intercepting the attack means stopping the attacker, regardless of whether the defender is the target or simply lies on an attack path. The investment of defender $j$ increases the probability $x_j \in [0,1]$ of intercepting an attack and comes at a cost, $c_j(x_j)$. The cost of protection is an increasing, differentiable, and strictly convex function of the protection level and the cost of no protection is $0$, i.e. $c_j(0) = 0$. We make the following assumption about the cost functions:

\begin{assumption}\label{ass:marginal_cost}
For every cost function $c_j(x_j)$ of defender $j \in [n]$, we assume that $c_j'(1) \geq d_j$ and $c_j'(0) = 0$.
\end{assumption} 
Assumption~\ref{ass:marginal_cost} implies that the only scenario in which the best response of the defender $j$ might be the ``perfect defense'' (i.e. $x_j=1$) is when he is the only attacked node.

In~\cite{paper} it is assumed that the cost function $c_j(x_j) = x_j^2/2$ for the final presented results, but many important results are proven for a wider class of cost functions satisfying Assumption~\ref{ass:marginal_cost}. 

Defenders choose their interception probabilities independently and simultaneously with the attacker choosing a target $j \in [n]$ and an attack path $p$ from 0 to $j$. Let $P(\network)$ denote the set of all paths in graph $\network$ originating at the attacker node, $0$, and $t(p)$ denote the terminal node of path $p$. Given any path $p \in P(\network)$ and $j \in p$, the set of predecessors of $j$ in $p$ is $Pred(p, j) = \{k \in p : \text{ $k$ lies on path $p$ between 0 and $j$ }\}$.  Fix a vector of interception investments $(x_1 , \ldots , x_n )$. For any node $j$ on path $p$, we let $\alpha_j(p, (x_i)_{i \in [n]})$ denote the probability that the attack along $p$ reaches $j$

\begin{equation*}
    \alpha_j(p, (x_i)_{i \in [n]}) = {\displaystyle \prod_{k \in Pred(p,j)} (1-x_k)}.
\end{equation*}

The probability that the attack on target $j$ along path $p$ is successful is given by

\begin{equation*}
    \beta_j(p, (x_i)_{i \in [n]}) = \alpha_j(p, (x_i)_{i \in [n]}) \cdot (1-x_j).
\end{equation*}

The set of pure strategies of the attacker is defined by the set $P(\network)$ of all paths originating at $0$. The set of pure strategies of every defender $j \in [n]$, the level of protection, is the interval $[0,1]$. Pair $(p, (x_i)_{i \in [n]})$ describes a pure strategy profile, with the payoff of the attacker given by

\begin{equation*}
    U(p, (x_i)_{i \in [n]}) = \beta_{t(p)}(p, (x_i)_{i \in [n]}) b_{t(p)},
\end{equation*}
and the payoff of defender $j$ given by
\begin{equation}
V_j(p, (x_i)_{i \in [n]}) = 
\begin{cases}
 \beta_{j}(p, (x_i)_{i \in [n]}) (-d_{j}) - c_j(x_j) \text{, if } j = t(p),\\
 -c_j(x_j) \text{, otherwise.}
\end{cases}
\end{equation}

We allow the attacker to use mixed strategies, choosing a probability distribution $\pi$ over all paths in $P(\network)$. Let $\Delta(P(\network))$ denote the set of all probability distributions over $P(\network)$. The expected payoff of the attacker from a mixed strategy profile $(\pi, (x_i)_{i \in [n]}) \in \Delta(P(G))\times [0,1]^n$ is given by

\begin{equation} \label{eq:payoff_attacker}
    U(\pi, (x_i)_{i \in [n]}) = \sum_{p \in P(\network)} \pi(p) \beta_{t(p)}(p, (x_i)_{i \in [n]}) b_{t(p)}.
\end{equation}

The expected payoff of defender $j$ is given by 
\begin{equation}\label{eq:payoff_defender}
    V_j(\pi, (x_i)_{i \in [n]}) = \sum_{\substack{p \in P(\network) \\ t(p) = j}} \pi(p) \alpha_j(p, (x_i)_{i \in [n]}) (1 - x_j) (-d_j) - c_j(x_j).
\end{equation}

Following~\cite{paper} we use the following assumption on defenders' importance.
\begin{assumption}\label{ass:b_not_equal}
For any two defenders $i$ and $j$, $b_i \neq b_j$.
\end{assumption}
This assumption means that no two defenders have the same strategic importance to the attacker. Moreover, without the loss of generality, we will assume that the defenders are numbered in increasing order with respect to their strategic importance to the attacker, i.e. $i < j \implies b_i < b_j$. 

\begin{definition} [Attack and defense game on a network] \label{game}
Quadruple $\gamedef$ defines an attack and defense game on a network with network $G$, set of players $V(G)$, defenders' cost functions, $c_j$, attacker's evaluations, $b_j$, and defenders' evaluations $d_j$.
\end{definition}

We are interested in calculating (mixed strategy) Nash equilibria (NE) of attack and defense games on a network defined by Definition 1. A strategy profile $(\pi^*, (x^*_i)_{i \in [n]})$ is a NE if and only if for every mixed strategy $\pi \in \Delta(P(\network))$ of the attacker, $U(\pi^*, (x^*_i)_{i \in [n]}) \geq U(\pi, (x^*_i)_{i \in [n]})$, and for every node $j \in [n]$ and every strategy $x_j \in [0,1]$, $V_j(\pi^*, (x_j, (x^*_i)_{i \in [n] \setminus\{j\}})) \leq V_j(\pi^*, (x^*_i)_{i \in [n]})$.

\section{Properties of the Nash Equilibria}

In this section, we recall important properties of the NE of the attack and defense game on a network. The properties follow from~\cite{paper} and are crucial for the computational results we obtain.

First, Bloch, Chatterjee, and Dutta~\cite{paper} establish the existence of mixed strategy NE in the game.
\begin{theorem}[Bloch et al.~\cite{paper}]\label{th:NE_existence}
The attack and defense game on a network always admits a Nash equilibrium in mixed strategies.
\end{theorem}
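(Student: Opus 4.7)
The plan is to apply a classical continuous-game existence theorem of the Debreu--Fan--Glicksberg type. First I would identify the strategy spaces: the attacker's mixed-strategy set $\Delta(P(\network))$ is a finite-dimensional simplex, since $P(\network)$ is the finite set of simple paths from $0$ in the finite network $\network$; it is therefore nonempty, compact, and convex. Each defender's strategy set $[0,1]$ is also nonempty, compact, and convex, so the joint strategy space $\Delta(P(\network))\times [0,1]^n$ is a nonempty, compact, convex subset of a finite-dimensional Euclidean space.

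Second, I would verify joint continuity of the payoffs. Both $U(\pi,(x_i))$ in~\eqref{eq:payoff_attacker} and $V_j(\pi,(x_i))$ in~\eqref{eq:payoff_defender} are polynomials in the coordinates of $\pi$ and in $(x_i)_{i\in[n]}$, apart from the cost term $-c_j(x_j)$ which is continuous by the assumed differentiability of $c_j$. Hence all payoffs are jointly continuous.

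The main step is to verify own-strategy concavity. The attacker's payoff $U$ is linear in $\pi$, hence concave. For defender $j$, the key observation is that $\alpha_j(p,(x_i))$ does not depend on $x_j$, since $j\notin Pred(p,j)$ whenever $t(p)=j$. Consequently the expected-loss term $\sum_{p\,:\,t(p)=j}\pi(p)\,\alpha_j(p,(x_i))\,(1-x_j)\,(-d_j)$ is linear (hence concave) in $x_j$, while $-c_j(x_j)$ is strictly concave because $c_j$ is strictly convex by assumption. Summing, $V_j$ is strictly concave in $x_j$, which additionally yields a unique defender best response against any $(\pi,(x_k)_{k\neq j})$.

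With these three ingredients in place, Debreu--Fan--Glicksberg (or, equivalently, Kakutani's fixed point theorem applied to the joint best-response correspondence, which is nonempty-, convex-valued, and upper hemicontinuous by Berge's maximum theorem) yields a profile from which no player can improve unilaterally; this is the desired mixed-strategy Nash equilibrium. I do not expect any substantial obstacle: the finiteness of $P(\network)$ keeps the attacker's mixed-strategy simplex finite-dimensional and avoids the usual infinite-dimensional technicalities, and Assumption~\ref{ass:marginal_cost} guarantees that each $V_j$ is well-behaved on all of $[0,1]$ without any boundary pathologies, so the concave-game machinery applies directly.
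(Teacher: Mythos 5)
Your argument is correct and is essentially the standard route taken for this result (the paper itself only cites Bloch et al., whose existence proof likewise rests on the Debreu--Fan--Glicksberg/Kakutani argument for compact convex strategy sets with jointly continuous payoffs that are concave in own strategy). Your key verification --- that $\alpha_j(p,(x_i))$ does not depend on $x_j$, so $V_j$ is linear in $x_j$ up to the strictly concave term $-c_j(x_j)$, while the attacker's payoff is linear in $\pi$ over the finite-dimensional simplex $\Delta(P(\network))$ --- is exactly the needed structure, so there is no gap.
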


Second, they establish sufficient and necessary conditions for the existence of pure strategy NE.
\begin{lemma}[Bloch et al.~\cite{paper}]\label{lem:pureNE}
The described model yields NE in pure strategies if and only if the value $b_n$ of node $n$ satisfies
\begin{equation} \label{eq:pure_1}
    b_n (1 - c_n'^{(-1)}(d_n) ) \geq b_j
\end{equation}
for all $j$ such that there is a path $p$ from 0 to $j$ that does not contain $n$. 
\end{lemma}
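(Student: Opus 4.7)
The plan is to argue both directions by constructing/analyzing a candidate pure profile in which the attacker targets a single node along a single path.

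First I would pin down the structure of any pure NE. Fix a pure equilibrium in which the attacker plays a single path $p$ with terminal $t = t(p)$. By Assumption~\ref{ass:marginal_cost} we have $c_j'(0) = 0$, so any defender $j$ who is attacked with probability zero in equilibrium strictly prefers $x_j = 0$; in particular every node outside $\{t\}$ (even those lying on $p$ but not at its end) optimally sets $x_j = 0$. For the target $t$, the predecessors on $p$ all invest $0$, so $\alpha_t(p, x) = 1$ and the defender's first-order condition for maximizing $-(1-x_t)d_t - c_t(x_t)$ gives $c_t'(x_t) = d_t$, i.e.\ $x_t = c_t'^{-1}(d_t)$. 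Assumption~\ref{ass:marginal_cost} also guarantees that this interior solution lies in $[0,1]$, since $c_t'(0)=0 \le d_t \le c_t'(1)$ and $c_t'$ is strictly increasing by convexity.

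Next I would show the target must be $n$. Suppose for contradiction that $t \neq n$. Since every defender other than $t$ plays $0$, the attacker's payoff from reaching any $j \neq t$ along any path in $G$ is either $b_j$ (if the path avoids $t$) or $b_j(1 - x_t)$ (if it traverses $t$). Targeting $n$ along any path yields at least $b_n(1 - x_t) > b_t(1 - x_t)$ because $b_n > b_t$ by Assumption~\ref{ass:b_not_equal}, contradicting the equilibrium condition for the attacker. Hence the unique candidate for a pure NE has $t = n$, defense $x_n^* = c_n'^{-1}(d_n)$, and $x_j^* = 0$ for $j \neq n$.

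Finally I would verify that this candidate is a NE exactly when \eqref{eq:pure_1} holds. Given the defenders' strategies above, the attacker's best possible payoff from reaching any $j \neq n$ is $b_j$ if $j$ is reachable from $0$ by a path avoiding $n$, and at most $b_j(1 - x_n^*) \le b_n(1 - x_n^*)$ if every such path must traverse $n$. Thus the profile is a NE iff $b_n(1 - c_n'^{-1}(d_n)) \geq b_j$ for every $j$ with a path from $0$ not containing $n$, which is precisely \eqref{eq:pure_1}. The defenders' best-response conditions were already established above, so combining both directions proves the lemma.

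The main obstacle I expect is carefully handling the attacker's deviations: one must separate the deviations that avoid $n$ (yielding $b_j$) from those that must traverse $n$ (yielding at most $b_j(1 - x_n^*)$, which is automatically dominated by the equilibrium payoff), so that the quoted condition only ranges over targets reachable by a path avoiding $n$.
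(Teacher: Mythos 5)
The paper itself does not prove this lemma: it is imported from Bloch et al.~\cite{paper}, and the only argument supplied in the text is the remark after the statement that, when \eqref{eq:pure_1} holds, the profile in which the attacker targets $n$ along some path, defender $n$ invests $c_n'^{(-1)}(d_n)$, and every other defender invests $0$ is a pure NE. Your proposal reconstructs exactly this profile and adds the necessity side (any pure NE must have this form, and attacker optimality against it forces \eqref{eq:pure_1}), which is the natural self-contained argument and is consistent with how the paper uses the lemma.

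One caveat: the step ``targeting $n$ yields at least $b_n(1-x_t) > b_t(1-x_t)$'' needs $x_t < 1$. Assumption~\ref{ass:marginal_cost} only requires $c_t'(1) \geq d_t$, so $x_t = c_t'^{(-1)}(d_t) = 1$ (perfect defense) is allowed, and the paper explicitly notes it can occur precisely when $t$ is the only attacked node --- which is the situation in a pure profile. In that boundary case your strict inequality degenerates and the conclusion ``the target must be $n$'' can fail: for instance, if $t \neq n$ is the unique neighbour of $0$ and $c_t'(1) = d_t$, then the profile where the attacker targets $t$, $x_t = 1$, and all other investments are $0$ is a pure NE (every attacker payoff is $0$), even though \eqref{eq:pure_1} may be violated. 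So your argument --- and, read literally under the paper's Assumption~\ref{ass:marginal_cost}, the ``only if'' direction of the statement itself --- needs this corner case excluded (e.g.\ by assuming $c_j'(1) > d_j$) or handled separately. Apart from that boundary issue, the proposal is correct and follows the intended route.
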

Note, that as $c_n$ is a strictly convex, differentiable function, the inverse function $c_n'^{(-1)}$, of its differential, $c_n'$, is well-defined. 

Deciding whether the condition introduced in Lemma~\ref{lem:pureNE} is satisfied can be done in time $O(n)$ by the following straightforward algorithm. After removing the node $n$ from the graph, all the nodes that remain connected to node $0$ by a path form a set of nodes that can be reached by the attacker with a path that does not contain $n$. For this set of nodes, we check whether Inequality~\eqref{eq:pure_1} is satisfied. If the condition is met then every profile where the attacker chooses a path $p$ that terminates at node $n$, defender $n$ chooses investment of $c_n'^{(-1)}(d_n)$ (value obtained from finding the derivative of payoff function of the $n$'th defender and comparing it to $0$) and every other defender chooses investment of $0$ is a pure strategy NE. From now on we will focus on the parameters of the model that do not yield NE in pure strategies.

\subsection{Properties of Mixed Strategies Nash Equilibria}
Given a game $\Gamma = \gamedef$, let 
\begin{equation*}
    D_{\Gamma}(\pi, (x_i)_{i\in[n]}) \subseteq [n],
\end{equation*}
denote the set of all defenders attacked with positive probability under the strategy profile $(\pi, (x_i)_{i\in[n]})$. The following lemma about the independence of set $D_{\Gamma}(\pi, (x_i)_{i\in[n]})$ from a considered strategy profile $(\pi, (x_i)_{i\in[n]})$, that is a NE of $\Gamma$, follows from the proof of Theorem 2~\cite{paper}.

\begin{lemma}[Bloch et al.~\cite{paper}, Theorem 2]\label{lem:unique_set_attacked}
Given the Assumption~\ref{ass:b_not_equal}, for every attack and defense game on network $\Gamma = \gamedef$, and every two strategy profiles $(\pi, (x_i)_{i\in[n]})$ and $(\pi', (x'_i)_{i\in[n]})$ that are NE of $\Gamma$,
\begin{equation*}
 D_{\Gamma}(\pi, (x_i)_{i\in[n]}) = D_{\Gamma}(\pi', (x'_i)_{i\in[n]}).
\end{equation*}
\end{lemma}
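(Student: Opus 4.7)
I would argue by contradiction, assuming two NEs $(\pi, x)$ and $(\pi', x')$ of $\Gamma$ have distinct attack supports $D \neq D'$, and derive an inconsistency from the best-response conditions. Two structural observations set up the contradiction. First, in any NE, the defender best response is rigid. Writing $q_j := \sum_{p : t(p) = j} \pi(p) \alpha_j(p, x)$ for the probability that the attack reaches $j$ as the terminal node, the defender payoff~\eqref{eq:payoff_defender} reduces to $-q_j (1 - x_j) d_j - c_j(x_j)$, which is strictly concave in $x_j$. The first-order condition $c_j'(x_j) = q_j d_j$ uniquely determines $x_j$, and by Assumption~\ref{ass:marginal_cost} it satisfies $x_j > 0$ iff $q_j > 0$, i.e.\ iff $j$ is attacked with positive probability; so the attack support pins down the whole defender profile via the strictly increasing map $(c_j')^{-1}$. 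Second, the attacker's best-response condition yields a common equilibrium payoff $U^*$ with $b_{t(p)} \beta_{t(p)}(p, x) = U^*$ for every $p \in \operatorname{supp}(\pi)$ and $b_{t(p')} \beta_{t(p')}(p', x) \leq U^*$ for every other path.

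Pick $j \in D \setminus D'$ without loss of generality; then $x_j > 0$ and $x'_j = 0$. Let $p$ be an attacker-optimal path to $j$ in $(\pi, x)$, so $b_j (1-x_j) \alpha_j(p, x) = U^*$, while the attacker inequality at $(\pi', x')$ gives $b_j \alpha_j(p, x') \leq U^{*\prime}$ (using $x'_j = 0$). If one can produce such a path $p$ with $\alpha_j(p, x) \leq \alpha_j(p, x')$, then the chain
\[
U^{*\prime} \;\geq\; b_j \alpha_j(p, x') \;\geq\; b_j \alpha_j(p, x) \;>\; b_j (1-x_j) \alpha_j(p, x) \;=\; U^*
\]
gives $U^* < U^{*\prime}$, and a symmetric argument applied to some $k \in D' \setminus D$ yields $U^{*\prime} < U^*$, the desired contradiction. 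When one side of the symmetric difference is empty, one instead exploits that every $k \in D \cap D'$ satisfies the FOC $c_k'(x_k) = q_k d_k$ in both NEs, which together with the attacker equalities constrains the marginals $q$ tightly enough to rule out $D \subsetneq D'$ or $D' \subsetneq D$.

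The main obstacle is obtaining the predecessor comparison $\alpha_j(p, x) \leq \alpha_j(p, x')$: the two quantities differ through multiplicative factors $(1 - x_k)$ vs.\ $(1 - x'_k)$ over the predecessors $k$ of $j$ on $p$, and these predecessor investments are themselves coupled to the attacker's mixed strategy via their own FOCs. The natural way forward is to take $j$ to maximize $b_j$ over $D \triangle D'$ and to select $p$ so that its intermediate nodes lie outside $D \triangle D'$, which by the first observation forces $x_k = x'_k$ on the path and collapses the comparison to equality. Showing that such a path exists (or handling the predecessors inductively using the $b$-ordering when it does not) is the path-level analysis that constitutes the core of the proof of Theorem~2 in~\cite{paper}, and is what this lemma extracts as a standalone statement.
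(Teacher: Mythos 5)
A preliminary remark: the paper does not prove this lemma at all --- it is imported verbatim from the proof of Theorem~2 of Bloch et al.~\cite{paper} --- so the only meaningful comparison is against that external argument, and your own sketch ultimately defers to it (``the path-level analysis that constitutes the core of the proof of Theorem~2'').

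The gap is exactly the step you flag, and the fix you propose for it does not work. Your contradiction needs a path $p$ to some $j \in D \setminus D'$ with $\alpha_j(p,x) \leq \alpha_j(p,x')$, and you propose to route $p$ through nodes outside $D \bigtriangleup D'$ and argue that ``the first observation forces $x_k = x'_k$ on the path.'' That is false for predecessors $k \in D \cap D'$: the first-order condition $c_k'(x_k) = d_k q_k$ ties $x_k$ to the \emph{reach probability} $q_k$ in that particular equilibrium, not to mere membership in the support, and $q_k$ may differ between $(\pi,x)$ and $(\pi',x')$. So common membership gives only $x_k > 0$ and $x'_k > 0$, not equality; your earlier phrase ``the attack support pins down the whole defender profile'' overstates the observation in the same way. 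Equality of investments across equilibria is essentially the uniqueness statement of Theorem~\ref{th:NE_uniqueness}, which sits downstream of this lemma, so assuming it here is circular. Moreover, the existence of a path to $j$ whose intermediate nodes avoid $D \bigtriangleup D'$ is itself unproven, the nested case $D \subsetneq D'$ is dismissed with ``constrains the marginals tightly enough'' and no argument, and Assumption~\ref{ass:b_not_equal} --- the stated hypothesis of the lemma --- is never actually invoked at any identifiable point of the sketch, which is a strong sign that the load-bearing part of Bloch et al.'s argument (where distinctness of the $b_i$ and the ordering of values enter) is missing. The opening moves (defender FOC rigidity, attacker indifference, the $U^* $ vs.\ $U^{*\prime}$ comparison) are sound and are a reasonable skeleton, but as written the proposal is a plan with the central lemma-sized step left open, not a proof.
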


By Lemma~\ref{lem:unique_set_attacked}, the set of nodes attacked with positive probability in equilibrium depends only on the game's parameters. Therefore, given a game $\Gamma = \gamedef$ we will denote this set by $D(\Gamma)$. We will call nodes in $D(\Gamma)$ \emph{non-neutral nodes}. From proof of Theorem 2~\cite{paper}, the set of non-neutral nodes is invariant under the vector of values $(d_i)_{i \in [n]}$ and the vector of cost functions $(c_i)_{i \in [n]}$ (as long as they satisfy Assumption 1). This is stated in the following lemma.

\begin{lemma}[Bloch et al.~\cite{paper}]\label{lem:D_invariant}
Let $\Gamma = \gamedef$. For any cost functions vector $(c_i')_{i \in [n]}$, satisfying Assumption~\ref{ass:marginal_cost}, and values vector $(d_i')_{i \in [n]}$ it holds
\begin{equation*}
    D(\Gamma) = D(\Gamma'),
\end{equation*}
where $\Gamma' = (\network, (b_i)_{i \in [n]}, (d_i')_{i \in [n]}, (c_i')_{i \in [n]})$. 
\end{lemma}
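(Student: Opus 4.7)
The statement asserts that the set of non-neutral nodes is a function of $(\network, (b_i)_{i \in [n]})$ alone, independent of the specific cost functions and defender values satisfying Assumption~\ref{ass:marginal_cost}. My plan is to revisit the proof of Theorem 2 in Bloch et al.~\cite{paper}, which already establishes Lemma~\ref{lem:unique_set_attacked}, and to isolate from it the observation that the identification of attacked nodes uses only the network topology and the attacker valuations. Concretely, fix two games $\Gamma = \gamedef$ and $\Gamma' = (\network, (b_i)_{i \in [n]}, (d'_i)_{i \in [n]}, (c'_i)_{i \in [n]})$ sharing $(\network, (b_i)_{i \in [n]})$ and both satisfying Assumption~\ref{ass:marginal_cost}; by Theorem~\ref{th:NE_existence} each admits a mixed-strategy NE, and by Lemma~\ref{lem:unique_set_attacked} the sets $D(\Gamma)$ and $D(\Gamma')$ are well-defined independently of the specific equilibrium chosen.

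The argument proceeds in two stages. First, I would exploit the attacker's indifference condition: for every path $p$ in the support of the attacker's mixed strategy, the expected payoff $b_{t(p)} \prod_{k \in p}(1-x^*_k)$ equals a common equilibrium value $U^*$, while every path $p'$ outside the support satisfies $b_{t(p')} \prod_{k \in p'}(1-x^*_k) \leq U^*$. Under Assumption~\ref{ass:marginal_cost}, the boundary condition $c'_j(1) \geq d_j$ rules out $x^*_j = 1$ whenever the node is part of a non-singleton attacked set, so these expressions are strictly positive for every non-neutral node. Second, I would re-read the construction embedded in the proof of Theorem 2 of~\cite{paper}, which identifies $D(\Gamma)$ by starting from the highest-value node (which, outside the pure-strategy regime of Lemma~\ref{lem:pureNE}, must be attacked) and propagating inclusion of further nodes through a reachability test that combines only the values $(b_i)_{i \in [n]}$ with the paths available in $\network$. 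The cost and damage parameters enter only in pinning down the magnitudes of the interception probabilities $x^*_i$ and the mixing weights $\pi^*(p)$; they never intervene in the combinatorial decision of which nodes are selected into $D$.

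The main obstacle is making this second step fully rigorous: one has to verify that the selection rule extracted from the proof of Theorem 2 produces the same output for any $(d_i, c_i)$ satisfying Assumption~\ref{ass:marginal_cost}. A clean way to isolate this is to observe that the interior first-order conditions of the defenders, $c'_j(x^*_j) = d_j \cdot \mathrm{Pr}[\text{attack reaches } j]$, always have a unique solution $x^*_j \in [0,1)$ for any admissible $c_j, d_j$, so the existence of a consistent equilibrium supported exactly on a candidate set $D$ depends only on whether the induced indifference equations in the $b$-variables over $\network$ are solvable; and this solvability test is what the proof of Theorem 2 in~\cite{paper} performs. Once this invariance is made explicit, applying the same characterization to $\Gamma'$ yields $D(\Gamma) = D(\Gamma')$, completing the proof.
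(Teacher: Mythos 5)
The paper does not actually prove this lemma: it imports it verbatim from the proof of Theorem~2 of Bloch et al.~\cite{paper}, with only the remark that the construction there does not use $(d_i)$ or $(c_i)$. So the question is whether your sketch stands on its own, and here there is a genuine gap. Your second stage asserts precisely the thing to be proven --- that the cost and damage parameters ``never intervene in the combinatorial decision of which nodes are selected into $D$'' and that consistency of a candidate support reduces to ``solvability of indifference equations in the $b$-variables.'' That is not what the equilibrium system says. Given a candidate support, the defenders' first-order conditions indeed have a unique interior solution for any admissible $(c_j,d_j)$ (your first stage is fine), but the attacker's equilibrium payoff $U^*$ is pinned down by the normalization $\sum_j q_j^*=1$, and by \eqref{q_0}--\eqref{q_i} the $q_j^*$ depend explicitly on $c_j'$ and $d_j$. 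Which nodes belong to the support is then decided by where $U^*$ falls relative to the values $b_i$ (this is exactly the paper's test \eqref{inequality}, built from $F_k$, which is itself a function of the $c_i'$ and $d_i$). So the dependence on $(c,d)$ enters the combinatorial selection through $U^*$, and your argument gives no reason why it should cancel out.

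To see that an argument at this level of generality cannot go through under Assumption~\ref{ass:marginal_cost} alone, take the star network with $V=\{0,1,2,3\}$, edges $\{0,i\}$, $b_1=0.9$, $b_2=0.95$, $b_3=1$, and $c_i(x)=x^2/2$. With $d_1=d_2=d_3=0.1$ there is a NE in which the attacker mixes only over direct attacks on $2$ and $3$: best responses are $x_j=d_jq_j$, indifference plus $q_2+q_3=1$ give $U^*\approx 0.926>b_1$, so attacking the undefended node $1$ pays only $0.9$ and $D=\{2,3\}$. With $d_1=d_2=d_3=1$ the same construction on $\{2,3\}$ yields $U^*\approx 0.49<b_1$, the attacker would deviate to node $1$, and the equilibrium support is $\{1,2,3\}$. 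Both parameterizations satisfy Assumption~\ref{ass:marginal_cost} and neither admits a pure NE, yet the attacked set changes with $(d_i)$. This shows that the invariance cannot be derived from ``$b$-only solvability'' as you propose; whatever Bloch et al.\ actually establish in the proof of their Theorem~2 must rely on additional structure or hypotheses that your sketch (and, as transcribed, the lemma's statement) does not make explicit. As written, the proposal assumes the conclusion at its key step and therefore does not constitute a proof.
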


Following Lemma~\ref{lem:D_invariant}, for the remaining part of the paper, we will denote the set of nodes attacked in every NE of the game by $\support$. The set of nodes $[n] \setminus \support$ is never attacked under any NE. We call nodes in $[n] \setminus \support$ \emph{neutral nodes}. We have the following observation.

\begin{observation}
Every neutral node $j$ maximizes his payoff in every NE by choosing a strategy $x_j=0$.
\end{observation}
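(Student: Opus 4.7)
The plan is to observe that if $j$ is neutral, then the attacker's equilibrium strategy $\pi^{*}$ assigns zero probability to every path terminating at $j$, which collapses defender $j$'s payoff function to just the (negative) cost term, whose unique maximizer on $[0,1]$ is $x_j=0$.

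In more detail, I would fix a Nash equilibrium $(\pi^{*}, (x_i^{*})_{i \in [n]})$ and a neutral node $j \in [n] \setminus \support$. By the definition of $\support$ (combined with Lemma~\ref{lem:unique_set_attacked}, which makes the set of attacked nodes well-defined across NE), we have $\pi^{*}(p) = 0$ for every $p \in P(\network)$ with $t(p) = j$. Substituting into equation~\eqref{eq:payoff_defender}, the sum vanishes for \emph{every} choice of $x_j \in [0,1]$, so that for all $x_j$,
\begin{equation*}
    V_j\bigl(\pi^{*}, (x_j, (x_i^{*})_{i \in [n] \setminus \{j\}})\bigr) \;=\; -c_j(x_j).
\end{equation*}

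Thus the defender's best-response problem at $j$ reduces to maximizing $-c_j(x_j)$ over $[0,1]$. Because $c_j$ is strictly convex and differentiable with $c_j'(0)=0$, strict convexity forces $c_j'(x) > c_j'(0) = 0$ for all $x > 0$, so $c_j$ is strictly increasing on $[0,1]$. Together with $c_j(0)=0$, this yields $-c_j(x_j) < 0 = -c_j(0)$ for all $x_j > 0$, so $x_j = 0$ is the unique maximizer. By the best-response condition defining a NE, $x_j^{*} = 0$.

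There is no real obstacle here: the content of the observation is essentially an unpacking of definitions, with the only substantive ingredient being that the cost function is strictly increasing, which follows immediately from strict convexity plus $c_j'(0)=0$. The main thing to be careful about is to cite Lemma~\ref{lem:unique_set_attacked} when asserting that $j$ is not attacked under the specific NE at hand (so that the argument applies to every NE and not just to some particular one).
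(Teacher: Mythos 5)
Your proof is correct and is exactly the reasoning the paper leaves implicit (the observation is stated without proof): a neutral node receives zero attack probability in every NE, so his payoff collapses to $-c_j(x_j)$, which is uniquely maximized at $x_j=0$ since $c_j$ is strictly increasing with $c_j(0)=0$. Nothing is missing; your extra care in citing Lemma~\ref{lem:unique_set_attacked} to make ``neutral'' well-defined across all equilibria matches the paper's setup.
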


When the network, $G$, and the values of the nodes, $(b_i)_{i \in [n]}$ are clear from the context, we will use $D$ instead of $\support$ to denote the set of non-neutral nodes and $[n]\setminus D$ to denote the set of neutral nodes.

For a non-neutral node, $j$, let $P^j$ denote the set of all paths from $0$ to $j$ chosen by the attacker with positive probability in some NE of the game. Formally, path $p$ from $0$ to $j$ in $G$ belongs to $P^j$ if and only if there exists a strategy profile $(\pi, (x_i)_{i\in[n]})$ that is a NE of the game, such that $\pi(p) > 0$. Bloch et al.~\cite{paper} prove that any two paths in $P^j$  can differ only on the set of neutral nodes. Moreover, non-neutral nodes on any two paths in $P^j$ are aligned in the same sequence from the attacker to the target. This is stated by the following lemma.

\begin{lemma}[Bloch et al.~\cite{paper}]\label{lem:single_path}
For any two paths $p, p'$ in $P^j$,
\begin{equation*}
 Pred(p,j) \cap \support = Pred(p',j) \cap \support.   
\end{equation*}
Moreover, if $k,l \in Pred(p,j) \cap \support$ then
    \begin{equation}
        k \in Pred(p,l) \iff k \in Pred(p',l).
    \end{equation}
\end{lemma}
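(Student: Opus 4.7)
The plan is to establish two auxiliary facts and then pin down the non-neutral skeleton by induction. The first fact is that every non-neutral node $k \in D$ invests strictly positively in any NE: defender $k$'s first-order condition gives $c_k'(x_k) = d_k \mu_k$, where $\mu_k$ is the probability that $k$ is attacked as target; since $k \in D$ forces $\mu_k > 0$, Assumption~\ref{ass:marginal_cost} (with $c_k'(0)=0$ and $c_k$ strictly convex) yields $x_k > 0$. The second fact is a monotonicity property: for every NE $(\pi, x)$, every path $p$ in the support of $\pi$ with $t(p) = j$, and every $k \in Pred(p, j) \cap D$, we have $b_k < b_j$. This follows because attacker optimality forces $U(p, x) \ge U(p|_k, x)$, where $p|_k$ is the prefix of $p$ ending at $k$; after cancellation this reads $b_j (1-x_j) \prod_{m \in Pred(p,j) \setminus Pred(p,k) \setminus \{k\}}(1-x_m) \ge b_k$, and the left-hand side is bounded by $b_j$, giving $b_k \le b_j$, strict by Assumption~\ref{ass:b_not_equal}.

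For the main claim, I would set $A = Pred(p, j) \cap D$ and $A' = Pred(p', j) \cap D$ and induct on $|A|+|A'|$. The base case $|A| + |A'| = 0$ is trivial: attacker indifference within an NE together with $x_k > 0$ for $k \in D$ (Fact~1) forces the other set to be empty as well. For the inductive step, let $k^*$ be the element of $A \cup A'$ with the largest $b$-value. By Fact~2 applied to each NE containing $p$ or $p'$, $k^*$ is the immediate non-neutral predecessor of $j$ on whichever path it lies on. The goal is to show $k^* \in A \cap A'$: if, say, $k^* \in A \setminus A'$, one splices the $p'$-prefix ending just before its last non-neutral predecessor of $j$ with a portion reaching $k^*$, obtaining a deviation that contradicts attacker optimality in the NE containing $p'$ (or else forces a violation of the set invariance in Lemma~\ref{lem:unique_set_attacked}). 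Once $k^* \in A \cap A'$, the prefixes $p|_{k^*}$ and $p'|_{k^*}$ must lie in $P^{k^*}$, since any sub-optimal prefix would contradict the attacker's indifference between $p$ (respectively $p'$) and a path targeting $k^*$ directly. Applying the induction hypothesis to these shorter paths equates their non-neutral predecessor sets, and hence those of $p$ and $p'$; the ordering claim is recovered because the inductive reconstruction identifies the common immediate non-neutral predecessor at each step.

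The main obstacle is the across-NE case: $P^j$ collects paths from possibly different equilibria with different $x$-vectors, so pure within-NE indifference arguments do not directly compare $p$ and $p'$. Overcoming this requires extracting a graph-theoretic characterization of the non-neutral skeleton that does not depend on the specific NE---essentially, that the skeleton from $0$ to any $j \in D$ is the unique path through $D$ with strictly increasing $b$-values in a suitably reduced network. Establishing such a characterization from Facts~1--2 together with Lemma~\ref{lem:unique_set_attacked} is the technically most delicate step, and it is the point at which the subsequent reduction by ``linkers'' announced in the introduction enters: removing nodes that never participate in any $P^j$ should make the ordering by $b$ on the remaining non-neutral nodes rigid, forcing the skeleton to be NE-independent.
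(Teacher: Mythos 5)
There is a genuine gap, and you have in fact flagged it yourself: the whole content of this lemma is a statement \emph{across} Nash equilibria --- $p$ and $p'$ may be played in two different NE with different defense vectors $(x_i)$ --- whereas every tool you actually establish (Fact~1, Fact~2, the prefix-indifference argument, the splicing deviation) is a within-one-NE best-response or indifference argument. Your final paragraph concedes that closing the cross-NE case ``requires extracting a graph-theoretic characterization of the non-neutral skeleton that does not depend on the specific NE'' and that this is ``the technically most delicate step,'' but you never carry it out; you only conjecture that a reduction by linkers would make the $b$-ordering rigid. That conjecture is precisely the missing proof, and appealing to the linker/proper-graph machinery is also circular in the context of this paper, since that machinery (Observation~\ref{obs:same_eq_at_tree_game}, Observation~\ref{obs:acending_path}, and the reduction lemma) is developed \emph{after} and partly on top of Lemma~\ref{lem:single_path}, not before it. Note also that the paper itself does not prove this lemma --- it is imported from Bloch et al.~\cite{paper} --- so the statement you needed to supply is exactly the cross-equilibrium uniqueness argument from that reference, which your sketch does not reconstruct.

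Two further steps would not survive scrutiny even in the single-NE setting as written. First, the inductive splicing move (``splice the $p'$-prefix with a portion reaching $k^*$'') does not obviously produce a legal attacker deviation: the spliced walk need not be a simple path in $G$, and its payoff is evaluated against the defense profile of the NE containing $p'$, while the attractiveness of reaching $k^*$ was only argued under the (possibly different) defense profile of the NE containing $p$. Second, the claim that the prefixes $p|_{k^*}$ and $p'|_{k^*}$ ``must lie in $P^{k^*}$'' needs an argument: attacker optimality only gives $b_{k^*}\beta_{k^*}(p|_{k^*},x)\le U$, and equality (hence membership of that specific prefix in the support, or at least its optimality as a route to $k^*$) has to be derived from the equilibrium conditions rather than asserted; this is essentially the content of Lemma~\ref{lem:path_prefixes}, which is likewise cited, not proved, here. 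Facts~1 and~2 themselves are fine, but they are the easy part; as it stands the proposal stops short of the step that makes the lemma true.
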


Following Lemma~\ref{lem:single_path}, for every non-neutral node $j$, we denote the unique sequence of his predecessors from $D$ on any path in $P^j$ by $p^j$. We call $p^j$ \emph{the equilibrium attack path of $j$}. The equilibrium attack paths are not always paths in the original graph, as they can lack some of the neutral nodes that are essential to their connectivity. 

If a non-neutral node, $k \in D$, lies on an equilibrium attack path of another node non-neutral, $j \in D$, his equilibrium attack path, $p^k$, is a subsequence of $p^j$. This is stated by the following lemma.

\begin{lemma}[Bloch et al.~\cite{paper}]\label{lem:path_prefixes}
    Given two non-neutral nodes, $k$ and $j$, if $k$ is an element of $p^j$ then $p^k$ is a subsequence of $p^j$, i.e. that for some $m \in \{2,3, \ldots, |p^j| - 1\}$, $p^k$ is a sequence of first $m$ elements of $p^j$.
\end{lemma}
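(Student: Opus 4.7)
The plan is to fix an arbitrary Nash equilibrium $\sigma^* = (\pi^*, (x_i^*))$ with attacker equilibrium payoff $V^*$, and to reduce the statement to Lemma~\ref{lem:single_path} applied to $k$. Write $p^j = (k_1, k_2, \ldots)$ and suppose $k = k_r$. Because $k \in p^j$, Lemma~\ref{lem:single_path} forces every physical path $p \in P^j$ to pass through $k$ as an interior node, with the non-neutral nodes of $p$ strictly preceding $k$ being exactly $k_1, \ldots, k_{r-1}$. Fix any such $p \in P^j \cap \mathrm{supp}(\pi^*)$ and let $p_{\le k}$ be its initial segment terminating at $k$. If I can exhibit $p_{\le k}$ as an element of $P^k$, then Lemma~\ref{lem:single_path} applied to $k$ forces the non-neutral predecessors of $k$ along $p_{\le k}$ to coincide with $p^k$, which by construction are the first $r-1$ elements of $p^j$, establishing the desired prefix relation.

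The heart of the argument is to show that $p_{\le k}$ is an optimal attack on $k$ in $\sigma^*$. Since $k \in D$, pick any $q \in P^k \cap \mathrm{supp}(\pi^*)$. Attacker indifference gives the two identities
\begin{equation*}
\alpha_k(p)(1 - x_k^*)\,\gamma\, b_j \;=\; V^* \;=\; \alpha_k(q)(1 - x_k^*)\, b_k,
\end{equation*}
where $\gamma = \prod_{l \in p,\, l \text{ past } k}(1 - x_l^*)$ captures the survival factors contributed by non-neutral nodes of $p$ strictly after $k$. The non-deviation constraint for attacking $k$ via $p_{\le k}$ gives $\alpha_k(p)(1 - x_k^*)b_k \le V^*$, hence $\alpha_k(p) \le \alpha_k(q)$. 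For the reverse, consider the attacker's deviation in which she attacks $j$ via a path obtained by concatenating $q$ with the suffix of $p$ after $k$; the corresponding payoff is $\alpha_k(q)(1 - x_k^*)\gamma\, b_j$, and the non-deviation constraint gives $\alpha_k(q) \le \alpha_k(p)$. Equality of the two quantities then yields $\alpha_k(p)(1 - x_k^*)b_k = V^*$, so $p_{\le k}$ is an optimal path to attack $k$.

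The main obstacle is that the concatenation of $q$ with the suffix of $p$ must itself be a valid simple path in $G$ for the reverse-deviation constraint to apply. This requires that the non-neutral nodes on $q$ (namely $p^k$) and those on the post-$k$ suffix of $p$ (namely $k_{r+1}, \ldots, k_s, j$) be disjoint: otherwise some non-neutral node would act as both a predecessor and a successor of $k$ along different equilibrium paths, and by tracing alternative routes through that node one would produce two paths in $P^j$ violating the ordering clause of Lemma~\ref{lem:single_path}. Once the non-neutral nodes are disjoint, any overlapping neutral nodes on the two fragments can be eliminated by a shortcut at the last visit to the repeated node, which leaves every $\alpha_{\cdot}(\cdot)$ unchanged because neutral nodes satisfy $x^* = 0$ and contribute factor one. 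A final step is to convert the optimality of $p_{\le k}$ into membership in $P^k$: since both $p_{\le k}$ (targeting $k$) and $p$ (targeting $j$) yield the attacker's equilibrium value $V^*$, one constructs a perturbation of $\pi^*$ that shifts a small fraction of probability from $p$ to $p_{\le k}$; using the strict convexity of cost functions (Assumption~\ref{ass:marginal_cost}) to re-establish defender best responses continuously certifies a nearby Nash equilibrium whose support contains $p_{\le k}$, placing it in $P^k$ and completing the proof.
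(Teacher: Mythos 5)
The paper itself imports this lemma from Bloch et al.\ without giving a proof, so I can only judge your argument on its own merits. Your overall route is the natural exchange argument and is sound in outline: truncate an equilibrium path $p$ to $j$ at $k$, compare it with an equilibrium path $q$ targeting $k$ via the two no-deviation inequalities to get $\alpha_k(p)=\alpha_k(q)$, then invoke Lemma~\ref{lem:single_path} for target $k$. But two steps do not hold as written. First, your justification that the spliced path (that is, $q$ followed by the post-$k$ suffix of $p$) is simple rests on the claim that a repeated non-neutral node would ``produce two paths in $P^j$ violating Lemma~\ref{lem:single_path}''; Lemma~\ref{lem:single_path} only compares paths with the \emph{same} target, and membership in $P^j$ requires the path to be played in some NE, so a path you merely trace through those nodes is not automatically in $P^j$. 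Fortunately, disjointness is not needed: remove cycles from the concatenated walk to obtain a simple path from $0$ to $j$; every deleted node only deletes a factor $(1-x_l^*)\le 1$, so this feasible deviation pays at least $\alpha_k(q)(1-x_k^*)\gamma b_j$, which is all the inequality requires (you should also note $1-x_k^*>0$ and $\gamma>0$ in the non-pure-equilibrium regime before dividing).

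The genuine gap is the final step, where you convert the best-response status of $p_{\le k}$ into $p_{\le k}\in P^k$. Shifting probability from $p$ (which targets $j$) to $p_{\le k}$ (which targets $k$) changes the aggregate attack probabilities faced by defenders $j$ and $k$, so $x_j^*$ and $x_k^*$ cease to be best responses; re-optimizing them breaks the attacker's indifference, and by Lemma~\ref{lem:unique_set_attacked} and Theorem~\ref{th:NE_uniqueness} the equilibrium attack probabilities are rigid, so there is no ``nearby'' Nash equilibrium of the kind you invoke -- strict convexity of the costs does not rescue this. The repair is simple: shift mass from $q$ to $p_{\le k}$, i.e.\ \emph{within} the set of paths targeting $k$. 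Since $\alpha_k(p_{\le k})=\alpha_k(p)=\alpha_k(q)$ and each defender's payoff in \eqref{eq:payoff_defender} depends only on paths terminating at him, no defender's optimization problem changes, and the attacker is still mixing exclusively over paths paying $V^*$; the modified profile is therefore an exact NE whose support contains $p_{\le k}$, so $p_{\le k}\in P^k$, and Lemma~\ref{lem:single_path} applied to $p_{\le k}$ and $q$ identifies the non-neutral predecessors of $k$ with the prefix of $p^j$, finishing the proof. With these two repairs your argument goes through.
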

From Lemma~\ref{lem:path_prefixes}, it follows that the set of nodes $\{0\} \cup D$ and the set of equilibrium attack paths $\{p_j\}_{j \in D}$ constitute a tree that is invariant under the vector of cost functions $(c_i)_{i \in [n]}$ (as long as they satisfy Assumption 1) and the vector of values $(d_i)_{i \in [n]}$. Therefore, for a given game $\gamedef$, we denote this tree by $T(G, (b_i)_{i \in [n]})$ and call it an \emph{equilibrium attack tree}.

The concept of the equilibrium attack tree allows for the following redefinition of the game.

\begin{definition}[Equilibrium attack tree game]\label{def:eq_attack_game}
An equilibrium attack tree game induced by the attack and defense game on network $\gamedef$ is the attack and defense game on network $(T(G, (b_i)_{i \in [n]}),$ $(b_i)_{i \in D},$ $(d_i)_{i \in D},$ $(c_i)_{i \in D})$.
\end{definition}

In such a game, every defender is connected to the attacker by exactly one path -- his equilibrium attack path.
It means that every mixed strategy $\pi$ of the attacker is described by vector $(q_i)_{i \in D}$, which determines the probabilities of attack on every node.

\subsection{NE of the Equilibrium Attack Tree Game}
Given an equilibrium attack tree game $(T(G, (b_i)_{i \in [n]}),$ $(b_i)_{i \in D},$ $(d_i)_{i \in D},$ $(c_i)_{i \in D})$, let $D_0 \subseteq D$ denote the set of all the neighbours of $0$ in tree $T(G, (b_i)_{i \in [n]})$. By~\cite{paper}, the first-order conditions that have to be fulfilled by any NE of the game are

\begin{align}
    &x_j^* = 1 - \frac{U}{b_j} \text{, if } j \in D_0, \label{x_0}\\
    &x_j^* = 1 - \frac{b_{k(j)}}{b_j} \text{, if } j \in D \setminus D_0, \label{x_i} \\
    &q_j^* = \frac{c_j'\left(1 - \frac{U}{b_j}\right)}{d_j}  \text{, if } j \in D_0, \label{q_0}\\
    &q_j^* = \frac{b_{k(j)} \cdot c_j'\left(1 - \frac{b_{k(j)}}{b_j}\right)}{U \cdot d_j} \text{, if } j \in D \setminus D_0, \label{q_i}\\
    &\sum_{j} q_j^* = 1. \label{q_sum}
\end{align}
where $k(j)$ is the direct predecessor of $j$ in the equilibrium attack path $p^j$ and $U$ is the equilibrium utility of the attacker. 

Equations \eqref{x_0} and \eqref{x_i} are obtained from the equations guaranteeing that the attacker is indifferent among the targets in the support.
\begin{alignat*}{2}
& b_j (1 - x_j^* ) = U, &\text{ for } j \in D_0, \\
& b_j (1 - x_j^* ) = b_{k(j)}, &\text{ for } j \notin D_0.
\end{alignat*}

Equations \eqref{q_0} and \eqref{q_i} are obtained from maximizing the payoff function of every defender defined in \eqref{V_i_redefined}. First, we calculate the derivative
\begin{equation}\label{V_i_redefined}
    \frac{\partial V_j(q, x_1, \ldots, x_n)}{\partial x_j} = \alpha_j x_j q_j^* d_j - c'_j(x_j).
\end{equation}
The function $V_j(x_j)$ is concave, therefore it is only increasing in an interval where $\alpha_j x_j q_j^* d_j \geq c'_j(x_j)$. It follows from Assumption~\ref{ass:marginal_cost}, that $0$ is in this interval while $1$ is not, therefore the derivative is equal to 0 in the maximum, hence
\begin{equation} \label{eq:derivative}
    c_j'(x_j^*) = \alpha_j q_j^* d_j.
\end{equation}
In any NE the attacker is indifferent over the strategies in the support, i.e. 
\begin{equation*}
    U = b_j \alpha_j(1-x_j^*).
\end{equation*}
After transforming this equation, we get
\begin{equation*}
    \alpha_j = \frac{U}{b_j(1-x_j^*)}.
\end{equation*}
This means that the equation \eqref{eq:derivative} states
\begin{equation*} 
    c_j'(x_j^*) = \frac{U q_j^* d_j}{b_j(1-x_j^*)},
\end{equation*}
hence
\begin{equation*}
    q_j^* = \frac{c_j'(x_j^*)b_j(1-x_j^*)}{U \cdot d_j}.
\end{equation*}
Using equations \eqref{x_0} and \eqref{x_i} we get \eqref{q_0} and \eqref{q_i}, respectively.
Equation \eqref{q_sum} states that the probabilities in any mixed strategy of the attacker sum up to 1. We conclude this subsection by stating the uniqueness of the solution to the introduced set of equations.

\begin{theorem}[Bloch et al.~\cite{paper}]\label{th:NE_uniqueness}
Given Assumption~\ref{ass:b_not_equal}, the proposed set of first-order conditions~\eqref{x_0}-\eqref{q_sum} yields exactly one solution
\begin{equation*}
    ((q^*)_{i \in D}, (x_i)_{i \in D}, U) \in [0,1]^{|D|} \times [0,1]^{|D|} \times [0,1]
\end{equation*}
that is the unique NE of the equilibrium attack tree game. 
\end{theorem}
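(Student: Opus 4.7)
The plan is to reduce the coupled system \eqref{x_0}--\eqref{q_sum} to a single scalar equation in the attacker's equilibrium utility $U$ and then extract uniqueness and existence from monotonicity and continuity. First, I would observe that \eqref{x_0}--\eqref{q_i} express every $x_j^*$ and every $q_j^*$ explicitly as a function of $U$ alone, once the equilibrium attack tree $T(\network,(b_i)_{i\in[n]})$ and the remaining parameters are fixed: for $j \in D \setminus D_0$, $x_j^* = 1 - b_{k(j)}/b_j$ is already independent of $U$; for $j \in D_0$, $x_j^* = 1 - U/b_j$; and the probabilities $q_j^*(U)$ follow by direct substitution into \eqref{q_0}--\eqref{q_i}. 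Inserting these into \eqref{q_sum} collapses the whole system to the single equation $f(U) = 1$, where
\begin{equation*}
    f(U) := \sum_{j \in D_0} \frac{c_j'\!\left(1 - U/b_j\right)}{d_j} + \sum_{j \in D \setminus D_0} \frac{b_{k(j)}\, c_j'\!\left(1 - b_{k(j)}/b_j\right)}{U\, d_j}.
\end{equation*}

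Next, I would show that $f$ is strictly decreasing and continuous on its admissible range $(0, b_{\min}]$, with $b_{\min} = \min_{j \in D_0} b_j$ the natural upper bound on $U$ forced by $x_j^* \geq 0$ at every root of the tree. Every term in the second sum has the form $A_j/U$ with $A_j > 0$, and so is strictly decreasing in $U$. Each term in the first sum is strictly decreasing in $U$ as well: strict convexity of $c_j$ makes $c_j'$ strictly increasing, while $1 - U/b_j$ is strictly decreasing in $U$. Strict monotonicity alone suffices to conclude that \eqref{x_0}--\eqref{q_sum} has at most one solution, giving uniqueness of any NE of the equilibrium attack tree game.

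For existence I would use an intermediate value argument. As $U \to 0^+$, the $1/U$ terms force $f(U) \to \infty$ whenever $D \setminus D_0 \neq \emptyset$, and when $D = D_0$, Assumption~\ref{ass:marginal_cost} gives $c_j'(1) \geq d_j$, hence $f(0) = \sum_{j \in D_0} c_j'(1)/d_j \geq |D_0| \geq 1$. At the upper endpoint $U = b_{\min}$, the term indexed by a minimizer collapses to $c_j'(0)/d_j = 0$ by Assumption~\ref{ass:marginal_cost}, and the failure of the pure-strategy condition of Lemma~\ref{lem:pureNE} (which is standing throughout this subsection) translates into $f(b_{\min}) < 1$. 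Continuity plus strict monotonicity then produce a unique $U^* \in (0, b_{\min})$ with $f(U^*) = 1$, and back-substitution into \eqref{x_0}--\eqref{q_i} recovers the unique admissible $((q_j^*)_{j \in D}, (x_j^*)_{j \in D}, U^*)$.

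Finally, to promote this unique solution of the first-order conditions to the unique NE of the equilibrium attack tree game, I would close both directions. The derivations preceding the theorem already show that any NE must satisfy \eqref{x_0}--\eqref{q_sum}, so the candidate is unique. Conversely, strict convexity of $c_j$ makes $V_j$ strictly concave in $x_j$, so \eqref{eq:derivative} is sufficient for $x_j^*$ to be a defender best response; and \eqref{x_0}--\eqref{x_i} render the attacker indifferent across all targets in $D$, so $(q_j^*)_{j \in D}$ is an attacker best response. The main obstacle I anticipate is the boundary inequality $f(b_{\min}) < 1$: turning the no-pure-NE hypothesis into this strict inequality is the only step that truly uses that hypothesis, and pinning it down cleanly is the delicate part of the argument.
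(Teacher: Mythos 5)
The paper does not actually prove this theorem --- it is recalled from Bloch et al., and the only related machinery in the paper is the function $F_k(U)$ of the next subsection, which is exactly your $f(U)$ with the same strict-monotonicity observation. So your uniqueness half is in line with how the paper uses the result: given that any NE of the tree game satisfies \eqref{x_0}--\eqref{q_sum}, the system collapses to $f(U)=1$ with $f$ continuous and strictly decreasing (your verification that each term decreases, using $c_j'$ strictly increasing, $c_j'(0)=0$, and $b_{k(j)}<b_j$ along equilibrium attack paths, is fine), hence at most one solution and at most one NE.

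The genuine gap is in the existence half, and you have located it yourself but proposed an unworkable repair. The failure of the pure-strategy condition of Lemma~\ref{lem:pureNE} is a statement about $b_n$ and paths avoiding node $n$; it has no direct bearing on the value $b_{\min}=b_{k^*}$ of the smallest node in $D_0$, and I see no route from ``no pure NE'' to $f(b_{\min})<1$ (indeed, for an arbitrary candidate set $\{k,\ldots,n\}$ this boundary inequality routinely fails --- that is precisely why Algorithm~2 must search for $k^*$). The inequality $f(b_{\min})\leq 1$ for the \emph{true} $D$ holds because $D$ is by definition the support of an NE that already exists: Theorem~\ref{th:NE_existence} gives a mixed NE of the equilibrium attack tree game, necessity of the first-order conditions gives $f(U^*)=1$ with $U^*\leq b_j$ for all $j\in D_0$ (since $x_j^*=1-U^*/b_j\in[0,1]$), and monotonicity then yields the boundary bound --- but at that point the intermediate-value argument is redundant. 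So the clean closing of existence is ``existence of NE $+$ necessity of the FOC,'' not IVT at the endpoint. Relatedly, your final paragraph leans on ``the derivations preceding the theorem'' for the claim that \emph{every} NE of the tree game satisfies \eqref{x_0}--\eqref{q_sum}; this presupposes that in every NE of the tree game all nodes of $D$ are attacked with positive probability, defenders play interior best responses, and the attacker is indifferent across $D$. That full-support/indifference statement is the substantive content imported from Bloch et al. (via Lemma~\ref{lem:unique_set_attacked} and the construction of $D$), and a self-contained proof would have to establish it rather than assume it.
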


\subsection{Properties of the Equilibrium Attack Tree}
In this subsection, we present the properties of the equilibrium attack tree that follows from Theorem~\ref{th:NE_uniqueness}. Consider a non-neutral node $j \in D \setminus D_0$. In the NE of the equilibrium attack tree game $(T(G, (b_i)_{i \in [n]}),$ $(b_i)_{i \in D},$ $(d_i)_{i \in D},$ $(c_i)_{i \in D})$, node $j$ is attacked through the equilibrium attack path $p^j = (0,p^j_1,p^j_2,\ldots,k(j),j)$, and the probability $\alpha_j$ of attacker successfully reaching the node $j$ is
\begin{equation*} 
    \alpha_j = \prod_{i \in \{p^j_1,p^j_2,\ldots,k(j)\}} (1 - x^*_i).
\end{equation*}
Using the \eqref{x_0} and \eqref{x_i}, we get
\begin{multline}\label{eq:alpha_j_minimizing}
\alpha_j = \left(1 - \left(1 - \frac{U}{b_{p^j_1}}\right)\right) \prod_{i \in \{p^j_2,\ldots,k(j)\}} \left(1 - \left(1 - \frac{b_{i-1}}{b_{i}} \right)\right) = \frac{U}{b_{p^j_1}} \prod_{i \in \{p^j_2,\ldots,k(j)\}} \left(\frac{b_{i-1}}{b_{i}} \right) = \frac{U}{b_{k(j)}}.
\end{multline}
The nodes in $D$, that can be a direct predecessor of a node $j$ in his equilibrium attack path, are the non-neutral nodes that can be reached in graph $G$ from $j$ by any path that does not contain any other node from $\{0\}\cup D$. Let $N(j, D, G) \subset D$ denote the set of these nodes. Formally, non-neutral node $i$ is in $N(j, D, G)$ if and only if there exists a path $p$ from $j$ to $i$ in $G$ that does not contain any nodes from $(D \cup \{0\}) \setminus \{i,j\}$.

Equilibrium attack paths are chosen by the attacker to maximize her payoff. Equation~\eqref{eq:alpha_j_minimizing} states, that the smaller the value $b_{k(j)}$, of the direct predecessor of $k(j)$ of node $j \in D \setminus D_0$ on equilibrium attack path $p^j$, the greater the probability of reaching the node $j$ by the attacker. We conclude this with the following observation, which states how the attacker chooses the equilibrium attack tree for a given graph $G$, set of nodes $D_0$ and their evaluations $(b_i)_{i \in D_0}$.

\begin{observation}[Bloch et al. \cite{paper}] \label{obs:connection_smallest}
For any node $j \in D_0$, the attacker maximizes her payoff in the NE of the equilibrium attack tree game by attacking $j$ directly. For any node $j' \in D \setminus D_0$ the attacker maximizes her payoff in the NE of the equilibrium attack tree game by attacking node $j'$ along the equilibrium attack path where the direct predecessor of $j'$ is node $i \in N(j, D, G)$ with the lowest value $b_i$. 
\end{observation}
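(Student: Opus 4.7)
The statement has two parts; the first (concerning $j \in D_0$) is essentially immediate from the definition of $D_0$ as the set of direct neighbours of $0$ in the equilibrium attack tree, so the equilibrium attack path of such a $j$ is $(0,j)$ and the attacker reaches $j$ directly. The substance lies in the second part.

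I would prove the second part by contradiction. Suppose that in some NE of the equilibrium attack tree game, $j \in D \setminus D_0$ has $k(j) = i^{\star}$, yet there exists $i \in N(j, D, G)$ with $b_i < b_{i^{\star}}$. I will exhibit a deviation for the attacker that strictly exceeds the equilibrium payoff $U$, contradicting Theorem \ref{th:NE_uniqueness}. By definition of $N(j, D, G)$, there is a path in $G$ from $i$ to $j$ whose internal nodes all lie in $[n] \setminus D$; by Observation 1 these intermediaries invest $0$ in equilibrium and contribute factor $1$ to the survival probability. Concatenating this path with a realisation in $G$ of the equilibrium attack path of $i$ (obtained by splicing in neutral connectors between consecutive non-neutral tree nodes, which exist by the definition of the tree $T(G,(b_i)_{i\in[n]})$ and Lemma \ref{lem:path_prefixes}) gives a valid attacker path $p$ from $0$ to $j$.

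The attacker's payoff along $p$ equals $\alpha_i \cdot (1 - x_i^{\star}) \cdot (1 - x_j^{\star}) \cdot b_j$, where $\alpha_i$ is the probability of reaching $i$ along the tree path from $0$. I would then handle the two cases $i \in D_0$ and $i \in D \setminus D_0$ uniformly: in the first, $\alpha_i = 1$ and \eqref{x_0} gives $1 - x_i^{\star} = U/b_i$; in the second, \eqref{eq:alpha_j_minimizing} gives $\alpha_i = U/b_{k(i)}$ and \eqref{x_i} gives $1 - x_i^{\star} = b_{k(i)}/b_i$. Either way, $\alpha_i (1 - x_i^{\star}) = U/b_i$. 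Applying \eqref{x_i} to $j$ yields $1 - x_j^{\star} = b_{i^{\star}}/b_j$, so the deviation payoff simplifies to
\[
\frac{U}{b_i} \cdot \frac{b_{i^{\star}}}{b_j} \cdot b_j \;=\; \frac{U \, b_{i^{\star}}}{b_i} \;>\; U,
\]
the strict inequality following from $b_i < b_{i^{\star}}$ and Assumption \ref{ass:b_not_equal}. This contradicts the indifference condition in NE, completing the proof.

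The main obstacle I anticipate is purely bookkeeping: making sure that the deviation path is a bona fide simple path in $G$ (no repeated vertices when splicing $i$'s predecessor path with the $i$-to-$j$ neutral path) and that the two cases for $i$ collapse to the same formula. Both follow from unpacking the definition of $N(j,D,G)$ and from the telescoping computation already used in deriving \eqref{eq:alpha_j_minimizing}, so no new technical machinery should be required.
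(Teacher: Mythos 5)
Your proposal is correct and follows essentially the same route as the paper: the heart of both arguments is the telescoping identity $\alpha_i(1-x_i^\ast)=U/b_i$ (Equation~\eqref{eq:alpha_j_minimizing}), from which a predecessor of smaller value yields payoff $U b_{i^\star}/b_i>U$, so the attacker routes each $j\in D\setminus D_0$ through the cheapest available node of $N(j,D,G)$. Your explicit deviation-path construction in $G$ (with the simplicity bookkeeping, which indeed goes through since any repeated nodes are neutral and contribute factor $1$) is just a more formal rendering of the paper's payoff-maximization argument.
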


\section{Computation of Mixed Strategy NE}

The main challenge of computing a NE of a given attack and defense game, $\gamedef$ is computing the set of all non-neutral nodes, $\support$. To tackle this problem, we introduce the idea of network reduction by a subset of neutral nodes. We prove that reducing the network by any subset of neutral nodes retains a particular correspondence between the Nash equilibria of the original and the reduced model (in particular, both games yield the same equilibrium attack tree game). Using network reduction, we show that the equilibrium attack tree of the given game can be found in polynomial time when the set of non-neutral nodes is known. Next, we introduce an important subset of neutral nodes called \emph{linkers}. After reducing the network by the set of linkers, the problem of finding the set of non-neutral nodes is easier. We propose an algorithm that allows for finding the set of non-neutral nodes of a given attack and defense game on a network. The algorithm generalizes the idea of finding the set of non-neutral nodes when the considered network is a linear graph with ascending values $b_j$, presented in~\cite{paper}, to finding this set when the considered network is an arbitrary graph. When the set of non-neutral nodes in the game $\gamedef$ is found, we calculate the NE of the corresponding equilibrium attack tree game. Finally, we show how to reconstruct a NE of a $\gamedef$ from the NE of the corresponding equilibrium attack tree game.

\subsection{Network Reduction}\label{sec:reducing}

For a given game $\gamedef$, its' set of non-neutral nodes $D$, and any neutral node $m \in [n] \setminus D$, let us construct a graph, called \emph{$G$ reduced by $m$}, obtained by removing node $m$ and adding links between all pairs of neighbours of $m$ that are not connected by an edge in $G$. We denote a graph $\network$ reduced by $m$ by $\network \setminus m$. Formally $V(\network \setminus  m ) = V(\network) \setminus \{ m \} $ and $E( \network \setminus m) = E(G) \setminus \{\{i,m\} : i \in V(G) \} \cup \{\{i,k\} : i \neq k \land \{i,m\} \in E(G) \land \{m,k\} \in E(G) \}$.

Let $h_m^G : P(\network) \rightarrow P(\network \setminus m )$ be a function such that, for a given path $p \in P(G)$,
\begin{equation*}
    h_m^G(p) = 
    \begin{cases}
        p &\text { if } m \notin p, \\
        p \setminus \{ m \} &\text{ otherwise.}
    \end{cases}
\end{equation*}
Function $h_m^G$ maps paths emerging from $0$ in graph $G$ to paths emerging from  0 in graph $G \setminus m$. 

Function $h_m^G$, defined for the set of the pure strategies of the attacker, naturally extends to a function $H_m^G : \Delta(P(\network)) \rightarrow \Delta(P(\network \setminus m))$ such that, for every probability distribution $\pi \in \Delta(P(G))$ over the set of paths in $G$,

\begin{equation*}
H_m^G(\pi) = \sum_{p \in P(\network)} \pi(p) \cdot h_m^G(p).
\end{equation*}

\begin{lemma} \label{lem:reduction}
Let $\Gamma$ = $\gamedef$. Node $m \in [n] \setminus D$ is a neutral node and strategy profile $(\pi^*, (x_i)^*_{i \in [n]})$ is a NE of $\Gamma$ if and only if the strategy profile $(H_m^G(\pi^*), (x_i)^*_{i \in [n] \setminus m})$ is a NE of $\Gamma \setminus m = \gamedefreduced$.
\end{lemma}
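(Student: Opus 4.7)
The plan is to set up a payoff correspondence between $\Gamma$ and $\Gamma \setminus m$ through $H_m^G$ and then translate best-response deviations in both directions. The backbone is the identity that for any path $p \in P(\network)$ with $t(p) \neq m$ and any investment vector with $x_m = 0$,
\[
\alpha_j(p, (x_i)_{i \in [n]}) = \alpha_j(h_m^G(p), (x_i)_{i \in [n]\setminus\{m\}}) \text{ for every } j \in p \setminus \{m\},
\]
together with the analogous equality for $\beta$. This is immediate because the factor removed from the product defining $\alpha$ when $m$ is dropped from $p$ is exactly $(1 - x_m) = 1$. Since $m$ is neutral, $\pi^*$ places no weight on paths with $t(p) = m$, and Observation~1 gives $x_m^* = 0$, so the identity applies to every path in the support of $\pi^*$.

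A direct consequence is that the expected payoffs of the attacker and of every defender $j \in [n] \setminus \{m\}$ agree under $(\pi^*, (x_i^*)_{i \in [n]})$ in $\Gamma$ and under $(H_m^G(\pi^*), (x_i^*)_{i \in [n] \setminus \{m\}})$ in $\Gamma \setminus m$. For each defender $j \neq m$, the quantity $\sum_{p: t(p) = j} \pi^*(p)\, \alpha_j(p, \cdot)$ governing his best-response condition is preserved by $H_m^G$, so his best-response condition is equivalent in the two games. Defender $m$ exists only in $\Gamma$; by neutrality his payoff collapses to $-c_m(x_m)$, which is uniquely maximized at $x_m^* = 0$, consistently with Observation~1.

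The remaining step is to translate the attacker's deviations. Projection is straightforward: any pure strategy $p \in P(\network)$ with $t(p) \neq m$ maps through $h_m^G$ to a path in $P(\network \setminus m)$ with the same $\beta_{t(p)}$, so every deviation in $\Gamma$ to a non-$m$ target corresponds to a deviation in $\Gamma \setminus m$ with identical payoff. Lifting a pure deviation $p' \in P(\network \setminus m)$ back to $\network$ is done by inserting $m$ at any edge of $p'$ created by the removal of $m$; this yields a simple path in $\network$ with identical payoff provided $p'$ traverses at most one such new edge. The main obstacle is when $p'$ uses two or more new edges, since the naive insertion would repeat $m$ and violate simplicity. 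I would resolve this by a shortcut argument: if $p' = (0, v_1, \ldots, v_k)$ contains new edges at positions $i_1 < i_2$, then $v_{i_1}$ and $v_{i_2+1}$ are both neighbours of $m$ in $\network$, so $\{v_{i_1}, v_{i_2+1}\}$ is itself an edge of $\network \setminus m$. Replacing the subpath from $v_{i_1}$ to $v_{i_2+1}$ by this single edge yields a shorter path $p'' \in P(\network \setminus m)$ to the same target, with $\beta_{t(p')}(p'', \cdot) \geq \beta_{t(p')}(p', \cdot)$ since $\alpha$ is a product of factors in $[0,1]$. Iterating reduces every $p'$ to a path with at most one new edge, which lifts cleanly.

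Combining projection with shortcut-and-lift yields the equivalence of the attacker's best-response condition in the two games and completes the forward direction. A small subsidiary point is needed for the reverse direction, namely ruling out attacker deviations in $\Gamma$ that target $m$ itself; I would handle this by invoking neutrality of $m$ (which is a property of $\Gamma$ alone, by Lemma~\ref{lem:unique_set_attacked} and Lemma~\ref{lem:D_invariant}), so that the attacker's equilibrium utility necessarily dominates $\beta_m(p, x^*) b_m$ for every path $p$ to $m$, and the payoff-preservation transports this bound to the lifted profile. Chaining these pieces establishes both directions of the bi-implication.
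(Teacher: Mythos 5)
Your proposal follows the same backbone as the paper's proof: the identity \eqref{eq:alpha_equals} (valid because $x_m=0$ contributes a factor $1$), preservation of each defender $j\neq m$'s best-response condition through the preserved quantity $\sum_{p:t(p)=j}\pi(p)\alpha_j(p,\cdot)$, and translation of attacker deviations between $\Gamma$ and $\Gamma\setminus m$ via $h_m^G$. Where you go beyond the paper is the lifting direction: the paper proves in detail only the direction in which attacker deviations are \emph{projected} down by $h_m^G$ and dismisses the other direction as ``analogous,'' whereas a deviation path $p'$ in $G\setminus m$ that uses two or more edges created by the reduction is not the image under $h_m^G$ of any simple path in $G$, so the analogous argument does not apply verbatim. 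Your shortcut argument (both endpoints flanking the offending segment are neighbours of $m$, hence adjacent in $G\setminus m$, and shortcutting only removes factors $(1-x_k)\le 1$ from $\alpha$, so $\beta$ weakly increases) correctly repairs this and is a genuine addition worth making explicit.

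The one place where your sketch is thinner than it should be is the attacker's deviations in $\Gamma$ that target $m$ itself (note that the paper's own chain \eqref{eq:attacker_remain} is not literally valid there, since $h_m^G(p)$ changes the target and hence the value from $b_m$ to $b_{k(m)}$). Invoking neutrality of $m$ via Lemmas~\ref{lem:unique_set_attacked} and~\ref{lem:D_invariant} does not by itself give the bound $\alpha_m(p,x^*)\,b_m\le U^*$ at the \emph{candidate} lifted profile: neutrality quantifies over profiles that are already known to be Nash equilibria of $\Gamma$, and the lifted profile's equilibrium status is exactly what is being proved. To close this you need an extra step identifying the lifted profile's defender investments (and the attacker's equilibrium payoff) with those of an actual NE of $\Gamma$ --- e.g.\ via Theorem~\ref{th:NE_existence} together with the uniqueness of the solution to the first-order conditions (Theorem~\ref{th:NE_uniqueness}), since $\Gamma$ and $\Gamma\setminus m$ induce the same equilibrium attack tree game --- and only then transport the no-profitable-deviation-to-$m$ bound from that NE to the lifted profile. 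Your phrase ``payoff-preservation transports this bound'' gestures at this but does not supply the identification; as written it is a gap, albeit one the paper's own terse proof shares.
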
 

\begin{proof}
Notice that the derivative of defender $j \in [n]$ payoff function, $V_j$, is given by
\begin{equation*}
    V_j'(x_j) = d_j \cdot \sum_{\substack{p \in P(\network) \\ t(p) = j}} \pi\left(p\right) \alpha_j\left(p, \left(x\right)_{i \in [n]}\right) - c_j'\left(x_j\right).
\end{equation*} 
Assumption~\ref{ass:marginal_cost} on costs functions guarantees that the maximum of $V_j$ is inside the interval $(0,1)$. As function $V_j$ is concave, we can find this maximum by solving $V_j'(x_j) = 0$. We get
\begin{equation} \label{eq:maximizing_defender_payoff}
    x_j = \left(c_j'\right)^{-1} \left(d_j \cdot \sum_{\substack{p \in P(\network) \\ t(p) = j}} \pi\left(p\right) \alpha_j\left(p, \left(x_i\right)_{i \in [n]}\right)\right).
\end{equation}
In any NE, any defender $j$ chooses the defense investment given by Equation~\eqref{eq:maximizing_defender_payoff} to maximize his payoff.

For the right to left implication, consider a NE $(\pi^*, (x^*_i)_{i \in [n] \setminus \{m\} })$ of a game $\Gamma \setminus m$. We will prove that when node $m$ is neutral, every strategy profile $(\pi, (x_m = 0, (x^*_i)_{i \in [n] \setminus \{m\} }))$ that satisfies $H_m^G(\pi) = \pi^*$ is a NE of the game $\Gamma$. Notice that $x_m = 0$ implies that, for every path $p \in P(G)$,
\begin{equation} \label{eq:alpha_equals}
    \alpha_j(p, (x_i)_{i \in [n]} ) = \alpha_j(h_m^G(p), (x_i)_{i \in [n] \setminus \{m\}}).
\end{equation}
By~\eqref{eq:alpha_equals} every defender $j \in [n] \setminus \{m\}$,
\begin{multline}
    \sum_{\substack{p \in P(\network \setminus \{m\}) \\ t(p) = j}} \pi(p) \alpha_j(p, (x_i)_{i \in [n] \setminus \{m\}}) = \sum_{\substack{p \in P(\network \setminus \{m\}) \\ t(p) = j}} \sum_{\substack{p '\in P(\network) \\ h_m^G(p') = p}} \pi(p') \alpha_j(p', (x_i)_{i \in [n]}) = \\
    \sum_{\substack{p \in P(\network) \\ t(p) = j}} \pi(p) \alpha_j(p, (x_i)_{i \in [n]}).
\end{multline}

As Equation~\eqref{eq:maximizing_defender_payoff} is satisfied for every defender $j \in [n] \setminus \{m\}$ by the strategy profile $(\pi^*, (x^*_i)_{i \in [n] \setminus \{m\} })$,
notice that every defender $j \in [n]$ cannot increase his payoff by deviating from $(\pi, (x_m = 0, (x^*_i)_{i \in [n] \setminus \{m\} }))
$ if the strategies of all the other players remain unchanged. Therefore, the only player that can benefit from changing her strategy in the strategy profile $(\pi, (x_m = 0, (x^*_i)_{i \in [n] \setminus \{m\} }))$ is the attacker. 

Consider any path $p \in P(G)$ such that $\pi(p) = 0$. Notice, that
\begin{multline} \label{eq:attacker_remain}
    U(p, (x_m = 0, (x_i)_{i \in [n] \setminus \{m\}})) = \\ U(h_m^G(p), (x_i)_{i \in [n] \setminus \{m\}})
    \leq  U(\pi^*, (x_i)_{i \in [n] \setminus \{m\}}) = U(\pi, (x_m = 0, (x_i)_{i \in [n] \setminus \{m\}})),
\end{multline}
hence the attacker also cannot increase her payoff by deviating from $(\pi, (x_m = 0, (x^*_i)_{i \in [n] \setminus \{m\} }))$.
The inequality follows from the NE definition and both equalities follow from the Equation~\eqref{eq:alpha_equals}.

The strategy profile  $(\pi, (x_m = 0, (x^*_i)_{i \in [n] \setminus \{m\} }))$ is a NE of the attack and defense game on network $\Gamma \setminus m$, because none of the players can increase their payoff by deviating from it.

The proof of reverse implication is analogous. \qed
\end{proof}

The reduction of the game extends to any set of neutral nodes by iterative reduction of neutral nodes one by one. First, note that for any game $\gamedef$, the corresponding set $D$ of non-neutral nodes and any two neutral nodes $j$, $k \in D \setminus [n]$
\begin{equation}\label{eq:reduction_order}
    H_k^{(G \setminus j)} \circ H_j^G = H_j^{(G \setminus k)} \circ H_k^G.
\end{equation}
The reduction of the game extends to any set $S \subseteq ([n] \setminus D)$ of neutral nodes by iterative reduction of nodes from $S$ one by one. Equation~\eqref{eq:reduction_order} guarantees that reduction by the set of nodes is invariant to the ordering in which we choose nodes from $S$.

\begin{definition}\label{def:set_reduction}
    For a given game $\gamedef$, the corresponding set of non-neutral nodes $D$, any subset of neutral nodes $S \subseteq ([n] \setminus D)$, and any sequence $s = \{s_1, s_2, \ldots,s_{|S|}\}$ of all the nodes in $S$, the reduction of $G$ by $S$ with sequence $s$ is defined as
    \begin{equation*}
        H^G_{S,s} = 
        \begin{cases}
            H^{G \setminus s_1}_{(S \setminus \{s_1\}), (s_2,\ldots,s_{|S|})}, &\text{ if $|S| > 2$,} \\
            H^{G \setminus s_1}_{s_2}, &\text{ if $|S| = 2$}.
        \end{cases}
    \end{equation*}
\end{definition}
As the reduction of the network is independent of the order of nodes from $S$, we denote it with $H^G_S$. Reducing the network by a given neutral node $i$ can be done in $O(n^2)$ time and reducing the network by a given set of nodes, $S\subseteq [n]$, can be done in time $O(|S| \cdot n^2)$.

\subsection{Linkers} \label{linkers} 
We now introduce an important set of nodes called \emph{linkers}. Let us call a node $i$ a \textit{linker} if he is not directly connected to the attacker and all of his neighbours' evaluations, $b_j$, are greater than $b_i$, i.e. $\{0,i\} \notin E(\network)$ and $\left( \{i,j\} \in E\left(\network\right) \implies b_j > b_i \right)$. All linkers are neutral nodes, which we state in the lemma below.

\begin{lemma} \label{lem:linker_is_neutral}
Every linker is a neutral node.
\end{lemma}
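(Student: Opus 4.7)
The plan is to argue by contradiction: suppose some linker $i$ is non-neutral, i.e.\ $i \in D$, and derive a contradiction by exhibiting a profitable deviation for the attacker. Because $i$ is a linker, $\{0, i\} \notin E(G)$, so $i \notin D_0$; hence every path along which the attacker reaches $i$ with positive probability has length at least two. Fix any NE $(\pi^*, (x^*_j)_{j \in [n]})$ and any path $p \in P(G)$ with $\pi^*(p) > 0$ and $t(p) = i$; let $v$ denote the direct predecessor of $i$ on $p$. Since $v$ is a neighbour of $i$ in $G$, the linker property gives $b_v > b_i$, identifying $v$ as the target to which the attacker should profitably deviate.

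Next I would compare the attacker's payoffs from the two pure strategies $p$ and $p'$, where $p'$ is obtained from $p$ by deleting its last node $i$ (so $t(p') = v$). Using $Pred(p, i) = Pred(p', v) \cup \{v\}$, a direct computation gives
\[
\alpha_v\bigl(p', (x^*_j)_{j \in [n]}\bigr)(1-x^*_v) = \alpha_i\bigl(p, (x^*_j)_{j \in [n]}\bigr),
\]
and hence $U(p', (x^*_j)_{j \in [n]}) = b_v \, \alpha_i(p, (x^*_j)_{j \in [n]})$, whereas $U(p, (x^*_j)_{j \in [n]}) = b_i(1-x^*_i)\,\alpha_i(p, (x^*_j)_{j \in [n]})$. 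Since $\pi^*(p) > 0$, the path $p$ is a best-response pure strategy for the attacker, so $U(p,(x^*_j)_{j\in[n]}) \geq U(p',(x^*_j)_{j\in[n]})$. Cancelling the common factor yields $b_i(1-x^*_i) \geq b_v$, which together with $x^*_i \in [0,1]$ forces $b_i \geq b_v$, contradicting $b_v > b_i$.

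The only technical point I expect to need is the justification that the common factor $\alpha_i(p, (x^*_j)_{j \in [n]})$ is strictly positive, so that cancellation is legitimate. I plan to derive this from Assumption~\ref{ass:marginal_cost}: if $\alpha_i(p,(x^*_j)_{j\in[n]}) = 0$, then some predecessor $k \in Pred(p, i)$ plays $x^*_k = 1$, but Assumption~\ref{ass:marginal_cost} makes the ``perfect defence'' $x_k^* = 1$ a best response only when $k$ is the \emph{only} attacked node, which is impossible since $i \neq k$ is also attacked with positive probability. No step seems to be a substantial obstacle: the key idea is simply that on any realised attack path the last edge leads from a higher-value neighbour to the lower-value linker, and the attacker would prefer to stop at the higher-value neighbour.
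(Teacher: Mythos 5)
Your proof is correct and takes essentially the same route as the paper's: truncate any attack path ending at the linker at its higher-valued direct predecessor and observe that this truncated path gives the attacker a strictly better payoff, so no path terminating at the linker can be in the equilibrium support. Your explicit justification that the common factor $\alpha_i(p,(x^*_j)_{j\in[n]})$ is strictly positive (via the remark following Assumption~\ref{ass:marginal_cost}) is a small technical point the paper's proof glosses over, but the core argument is identical.
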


\begin{proof}
Consider a linker $m \in [n]$ and any path $p \in P(\network)$ such, that $t(p) = m$, i.e. $m$ is a terminal node of $p$. Let $k(m)$ denote the direct predecessor of node $m$ on path $p$. Notice that the probability $\beta_m(m,p)$ of the successful attack on node $m$ through path $p$ satisfies
\begin{equation*}
    \beta_m(p, (x_i)_{i \in [n]}) = (1 - x_i) \beta_{k(m)}(p \setminus \{m\}, (x_i)_{i \in [n]}) \leq \beta_{k(m)}(p \setminus \{m\}, (x_i)_{i \in [n]}).
\end{equation*}

As $b_m < b_{k(m)}$ from the linker definition, for any strategy $p$, the strategy $p \setminus \{m\}$ yields a strictly greater payoff to the attacker. Therefore, the strategy $p$ is not in the NE support of the attacker. 

No path $p \in P(\network)$ with terminal $t(p) = m$ is in the attacker support in any NE, hence considered node $m$ is a neutral node. \qed
\end{proof}

Let $L(\network) \subseteq [n]$ denote the set of all linkers in graph $\network$. By Lemma~\ref{lem:linker_is_neutral}, the set $L(\network)$ is a subset of the set of all neutral nodes. Following the game reduction by the set of neutral nodes, we can reduce the graph $G$ by $L(\network)$ while retaining the correspondence between the NE of the original and the reduced model. We will call the graph $\network \setminus L(\network)$ a \textit{proper} graph.

The following example illustrates reducing a graph by its linker.
\begin{example}
In the graph shown in Figure~\ref{fig:linkers_1}, node 1 is a linker.
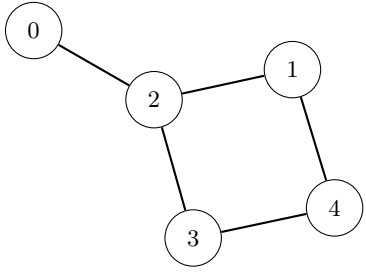
\begin{figure}[ht]
\centering
\begin{tikzpicture}[x=4cm,y=4cm] 
  \tikzset{     
    e4c node/.style={circle,draw,minimum size=0.75cm,inner sep=0}, 
    e4c edge/.style={sloped,above,font=\footnotesize}
  }
  \node[e4c node] (1) at (0.40, 1.53) {2}; 
  \node[e4c node] (2) at (0.53, 1.07) {3}; 
  \node[e4c node] (3) at (0.86, 1.63) {1}; 
  \node[e4c node] (4) at (0.00, 1.76) {0}; 
  \node[e4c node] (5) at (1.00, 1.17) {4};

  \path[draw,thick]
  (3) edge[e4c edge]  (5)
  (1) edge[e4c edge]  (3)
  (1) edge[e4c edge]  (4)
  (2) edge[e4c edge]  (5)
  (2) edge[e4c edge]  (1)
  ;
\end{tikzpicture}
\caption{Graph with a linker (node 1)} 
\label{fig:linkers_1}
\end{figure}  
We can reduce graph $\network$ by node 1, connecting all of 1's neighbours that are not directly connected. As a result, we get the proper graph shown in Figure~\ref{fig:linkers_2}, where nodes 2 and 4 are now directly connected and every node is connected to the attacker with at least one ascending path of indices.
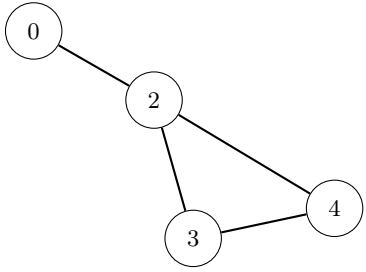
\begin{figure}[ht]
\centering
\begin{tikzpicture}[x=4cm,y=4cm]
  \tikzset{     
    e4c node/.style={circle,draw,minimum size=0.75cm,inner sep=0}, 
    e4c edge/.style={sloped,above,font=\footnotesize}
  }
  \node[e4c node] (1) at (0.40, 1.53) {2}; 
  \node[e4c node] (2) at (0.53, 1.07) {3};  
  \node[e4c node] (4) at (0.00, 1.76) {0}; 
  \node[e4c node] (5) at (1.00, 1.17) {4};

  \path[draw,thick]
  (1) edge[e4c edge]  (4)
  (2) edge[e4c edge]  (5)
  (1) edge[e4c edge]  (5)
  (2) edge[e4c edge]  (1)
  ;
\end{tikzpicture}
\caption{Graph after removing the linker (node 1)} 
\label{fig:linkers_2}
\end{figure}  
\end{example}

The following observation is a direct consequence of the properties of network reduction.
\begin{observation}\label{obs:same_eq_at_tree_game}
Let $\Gamma = \gamedef$, graph $G' = G \setminus L(G)$ is the proper graph of $G$, and $\Gamma' = (\network', (c_i)_{i \in [n] \setminus L(G)}, (b_i)_{i \in [n] \setminus L(G)}, (d_i)_{i \in [n] \setminus L(G)})$. Games $\Gamma$ and $\Gamma'$ yield the same equilibrium attack tree game.
\end{observation}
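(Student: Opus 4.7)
The plan is to derive the observation from the network-reduction lemma (Lemma~\ref{lem:reduction}) iterated over all linkers. Since every linker is neutral by Lemma~\ref{lem:linker_is_neutral}, we have $L(G) \subseteq [n] \setminus D(\Gamma)$, so the reduction $H^G_{L(G)}$ from Definition~\ref{def:set_reduction} is well-defined. A short induction on $|L(G)|$, invoking Lemma~\ref{lem:reduction} at each step, then produces a bijection between the NE of $\Gamma$ and the NE of $\Gamma'$: each NE of $\Gamma$ maps to the corresponding NE of $\Gamma'$ by pushing the attacker's mixed strategy through $H^G_{L(G)}$ and dropping the (zero) investments of the linkers, and each NE of $\Gamma'$ lifts to a NE of $\Gamma$ by assigning $x_\ell = 0$ for every $\ell \in L(G)$.

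Next I would verify that the set of non-neutral nodes is the same in both games. The map $h^G_m$ leaves the terminal of every path unchanged, so positive attack probability on a terminal $j \neq m$ in $\Gamma$ becomes positive attack probability on $j$ in $\Gamma'$, and conversely via the lift. Iterating over $L(G)$ gives $D(\Gamma) = D(\Gamma')$, which means that the parameter vectors $(b_i)_{i \in D}$, $(d_i)_{i \in D}$, $(c_i)_{i \in D}$ defining the two equilibrium attack tree games coincide.

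It remains to show that the tree structures agree, i.e.\ $T(G, (b_i)_{i \in [n]}) = T(G', (b_i)_{i \in [n] \setminus L(G)})$. By Lemma~\ref{lem:single_path}, the equilibrium attack path $p^j$ of a non-neutral $j$ is the sequence of its non-neutral predecessors on any positive-probability attack path. Because $h^G_m$ merely deletes the linker $m$ from a path while preserving the relative order of the remaining nodes, and because linkers never belong to $\{0\} \cup D$, the sequence $p^j$ obtained in $\Gamma$ matches the sequence $p^j$ obtained in $\Gamma'$. Thus the two equilibrium attack trees, and therefore the two equilibrium attack tree games, are identical.

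The main obstacle I anticipate is showing that no spurious equilibrium attack path can appear in $\Gamma'$ on account of the extra edges added by the reduction (each removal of a linker introduces a clique among its former neighbours). To handle this, I would observe that any path in $G'$ can be lifted to a path in $G$ by reinserting each removed linker between the two neighbours whose new edge in $G'$ was created by its deletion; this lift has exactly the same non-neutral predecessor sequence, so Observation~\ref{obs:connection_smallest} forces the attacker's optimal direct predecessor choice to coincide in both games, closing the argument.
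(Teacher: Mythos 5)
Your proposal is correct and takes essentially the same route as the paper, which records the observation as a direct consequence of iterated network reduction (Lemma~\ref{lem:reduction}) over the linkers, which are neutral by Lemma~\ref{lem:linker_is_neutral}; your verification that the non-neutral sets and the equilibrium attack paths coincide is exactly the detail the paper leaves implicit. One minor remark: your final lifting construction (reinserting deleted linkers) can produce a non-simple walk when a path in $G'$ uses several edges created by deleting the same linker, but that step is dispensable, since Lemma~\ref{lem:unique_set_attacked} and Lemma~\ref{lem:single_path} applied to $\Gamma'$ already rule out spurious equilibrium attack paths.
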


Determining whether a given node $i \in [n]$ is a linker can be done in time $O(n)$, hence finding the set $L(G)$ of all linkers can be done in time $O(n^2)$. As every node $i$ in the proper graph has a neighbour $j$ of a lower index, the following observation emerges.
\begin{observation} \label{obs:acending_path}
    Every node in a proper graph is connected to the attacker with at least one path of ascending indices. 
\end{observation}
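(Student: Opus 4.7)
The plan is to do strong induction on the node's index, leaning on the statement in the paragraph preceding Observation~2: every node $i$ in the proper graph has a neighbour $j$ of strictly lower index (or is directly connected to the attacker $0$).

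To justify that intermediate claim, take any $i \in V(G \setminus L(G))$. Since $i \notin L(G)$, the negation of the linker definition gives either $\{0,i\} \in E(G)$, or some neighbour $j$ of $i$ in $G$ with $b_j \leq b_i$, equivalently $j < i$ by Assumption~\ref{ass:b_not_equal}. The reduction $G \setminus L(G)$ removes only linker vertices together with their incident edges while adding edges between pairs of former neighbours of each removed linker; hence the edge $\{0,i\}$ (if present) is preserved, and any non-linker lower-index neighbour $j$ of $i$ in $G$ remains a neighbour in the proper graph. When all of $i$'s lower-index neighbours in $G$ happen to be linkers themselves, I would need a descent through the cascading edge-additions produced by the reduction to exhibit a lower-index replacement neighbour in $G \setminus L(G)$.

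Given the intermediate claim, Observation~\ref{obs:acending_path} follows by strong induction on the node index. For the base case, the smallest-indexed node $i^*$ in $G \setminus L(G)$ must, by the claim, have a lower-index neighbour; since $0$ is the only available option, we get $\{0,i^*\} \in E(G \setminus L(G))$ and the ascending path $(0, i^*)$. For the inductive step, let $i$ be an arbitrary node in $G \setminus L(G)$. Either $\{0,i\} \in E(G \setminus L(G))$, giving the ascending path $(0,i)$, or there is a neighbour $j$ of $i$ in $G \setminus L(G)$ with $j < i$; applying the inductive hypothesis to $j$ yields an ascending path $(0, v_1, \ldots, v_{k-1}, j)$ in $G \setminus L(G)$, and appending $i$ produces an ascending path from $0$ to $i$.

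The induction itself is routine; the genuinely non-trivial ingredient is the intermediate claim, specifically the sub-case where every lower-index neighbour of $i$ in $G$ is itself a linker. Handling that case rigorously would be the main obstacle and would require a careful analysis of how the neighbour-to-neighbour edges accumulate through the sequential linker removals of Definition~\ref{def:set_reduction}, possibly by tracing a chain of linkers beginning from the lowest-index neighbour of $i$ and exploiting the fact that a linker's neighbours are indexed strictly above it.
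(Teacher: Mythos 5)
Your induction is essentially the paper's own (implicit) argument: the paper simply asserts that every node of the proper graph has a neighbour of lower index and lets the observation ``emerge'', so all the content sits in your intermediate claim. The sub-case you flag and leave open --- all lower-index $G$-neighbours of $i$ are linkers --- is a genuine gap, and it is not merely a technical one: for the paper's literal definition of the proper graph as the single-round reduction $G \setminus L(G)$, the descent you hope to extract from the cascading edge-additions does not exist, because the claim is false there. Take $V(G)=\{0,1,2,3,4\}$ with edges $\{0,3\},\{0,4\},\{1,3\},\{1,2\}$ and $b_1<b_2<b_3<b_4$. Then $L(G)=\{1\}$ (node $2$ is not a linker since its neighbour $1$ has smaller value), and $G\setminus L(G)$ has edges $\{0,3\},\{0,4\},\{2,3\}$. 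In this ``proper'' graph node $2$ is not adjacent to $0$, its only neighbour is $3>2$, and the only path from $0$ to $2$ is $(0,3,2)$, which is not ascending. The mechanism is exactly the case you isolated: removing a linker can turn a non-linker into a node with only higher-index neighbours, i.e.\ it can create new linkers, so no analysis of the edge-additions of Definition~\ref{def:set_reduction} alone can close the gap.

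The statement (and your induction) is correct under the stronger hypothesis that the graph contains no linkers at all, which is what one gets by iterating the reduction until no linkers remain; the iteration is legitimate because a linker of the reduced game is neutral in the reduced game by Lemma~\ref{lem:linker_is_neutral}, and Lemma~\ref{lem:reduction} transfers this back to the original game, and it terminates after at most $n$ rounds. In a linker-free graph every node $i\neq 0$ is either adjacent to $0$ or has a neighbour $j$ with $b_j<b_i$, hence $j<i$ by Assumption~\ref{ass:b_not_equal} and the paper's numbering convention, and your strong induction on the index then produces the ascending path immediately. So the way to repair your proof is not a finer chase through the sequential linker removals, but replacing the intermediate claim's hypothesis ``$i$ is not a linker of $G$'' by ``the graph has no linkers''; as written, both your proposal and the paper's one-line justification overlook that $G\setminus L(G)$ need not have this property.
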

As a consequence of Observation~\ref{obs:acending_path}, we have the following lemma, characterizing the set of non-neutral nodes for proper graphs.

\begin{lemma}
    If graph $\network$ is proper, the set of non-neutral nodes $D$ of the attack and defense game is $\{k,k+1,\ldots,n\}$ for some $k \in [n]$.
\end{lemma}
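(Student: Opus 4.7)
I would argue by contradiction. Assume the set of non-neutral nodes is not of the form $\{k,\ldots,n\}$, so there exist $i\in D$ and $j\in[n]\setminus D$ with $i<j$. The plan is to build a simple path in $G$ from $0$ to $j$ whose attacker payoff, under the defenders' NE strategies, strictly exceeds the equilibrium utility $U$, contradicting the attacker's best-response condition.

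The argument rests on two simple facts. First, every $i\in D$ is attacked with positive probability, and Equation~\eqref{eq:alpha_j_minimizing} (combined with \eqref{x_0}--\eqref{x_i}) gives $\beta_i(q)=U/b_i$ on any support path $q$ to $i$, so $U<b_i<b_j$. Second, every neutral node $k\in[n]\setminus D$ plays $x_k^*=0$ (Observation~1), so $(1-x_k^*)=1$. Because $G$ is proper, Observation~\ref{obs:acending_path} yields an ascending path $p^*=(0,v_1,v_2,\ldots,v_m=j)$ in $G$ with $v_1<v_2<\cdots<v_m=j$.

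If every node $v_1,\ldots,v_{m-1}$ on $p^*$ is neutral, then $p^*$ is itself a simple path from $0$ to $j$ along which every defender has $x^*=0$; the success probability on $p^*$ is $1$, giving attacker payoff $b_j>U$, a contradiction. Otherwise let $l^*=\max\{l:v_l\in D\}$, which satisfies $l^*\le m-1$ since $v_m=j$ is neutral, set $v^*=v_{l^*}$, and note that $v_{l^*+1},\ldots,v_m$ are all neutral. Pick a support path $q$ from $0$ to $v^*$ in $G$, so $\beta_{v^*}(q)=U/b_{v^*}$. Concatenating $q$ with the tail $(v^*,v_{l^*+1},\ldots,v_m)$ of $p^*$ produces a walk $q'$ from $0$ to $j$ in $G$; from $q'$ I would extract, by standard cycle removal, a simple path $p''$ from $0$ to $j$ with $V(p'')\subseteq V(q')$.

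The crucial computation is that, because $v_{l^*+1},\ldots,v_m$ are neutral,
\[
\prod_{k\in V(q')\setminus\{0\}}(1-x_k^*)\;=\;\prod_{k\in V(q)\setminus\{0\}}(1-x_k^*)\;=\;\beta_{v^*}(q)\;=\;U/b_{v^*}.
\]
Since $V(p'')\setminus\{0\}\subseteq V(q')\setminus\{0\}$ and each factor is at most $1$, dropping nodes only increases the product, so $\beta_j(p'')\ge U/b_{v^*}$; hence attacking $j$ along $p''$ gives payoff $\ge b_jU/b_{v^*}>U$ because $b_j>b_{v^*}$, contradicting the NE. The only real obstacle is enforcing the simple-path requirement of the model, which I would dispatch uniformly by this cycle-removal plus monotonicity argument.
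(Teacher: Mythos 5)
Your proof is correct and takes essentially the same route as the paper's: use the ascending path guaranteed for a proper graph, the fact that neutral nodes invest $x^*=0$, and a profitable attacker deviation to a higher-valued neutral node to contradict the equilibrium condition. Your splicing of a support path to the last non-neutral node $v^*$ with the neutral tail, followed by cycle removal and the monotonicity of the success probability, is just a more careful rendering of the case analysis the paper handles informally (``the same reasoning for every other node'').
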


\begin{proof}
Let us assume that $k$ is the lowest index of a node attacked with positive probability in NE. We will prove that every node with an index greater than $k$ is also attacked with positive probability. Let us assume that node $k+1$ is not attacked in NE. From Observation~\ref{obs:acending_path}, we know there is at least one ascending path of nodes from 0 to $k+1$. 
If there is such a path that does not contain $k$, then $k+1$ has to be attacked. If he were not attacked, then he would not defend himself, and therefore attacking him would yield a greater payoff to the attacker than attacking node $k$. 
If every ascending path from 0 to $k+1$ contains $k$, then again, $k+1$ has to be attacked. If he was not attacked, he would not defend himself, and therefore the attacker could reach him with the same probability as the node $k$, but $k+1$ would yield a greater payoff.
The same reasoning for every other node with an index greater than $k$ shows that if $k$ is the lowest index of a non-neutral node, and graph $\network$ is proper, hence $D = \{k,k+1,\ldots,n\}$. \qed
\end{proof}

\subsection{Computation of the Equilibrium Attack Tree}

Consider game $\gamedef$, with a proper graph $G$, and the corresponding set of non-neutral nodes, $D$. The equilibrium attack tree $T(G, (b_i)_{i \in [n]})$ can be found in polynomial time with the following algorithm.
\begin{algorithm}[H]
 \hspace*{\algorithmicindent} \textbf{Input:} proper graph $G$ and a set $D \subseteq V(G)$ \\
 \hspace*{\algorithmicindent} \textbf{Output:} equilibrium attack tree $T$ 
\begin{algorithmic}[1]
\STATE $G' = G \setminus ([n] \setminus D)$
\STATE $T$ = empty graph
\STATE V($T$) = $\{0\} \cup D$
\FOR{$j \in D$}
    \IF{$(0,j) \in$ E($\network'$)}
        \STATE E($T$) = E($T$) $\cup \{(0,j)\}$
    \ELSE
        \STATE find $N(j, G')$ \COMMENT{the set of neighbours of $j$ in graph $G'$}
        \STATE $i = \min(N(j, G'))$
        \STATE E($T$) = E($T$) $\cup \{\{i,j\}\}$
    \ENDIF
\ENDFOR 
\RETURN $T$
\end{algorithmic}
\caption{Constructing equilibrium attack tree}
\label{alg:equilibrium_attack_tree}
\end{algorithm}
From Observation~\ref{obs:acending_path}, we know that every node in a proper graph has at least one neighbour of a smaller index, hence in every iteration of the \textit{for} loop, a new edge is added to the graph $T$. As the resulting graph $T$ is a connected graph with $n+1$ vertices and $n$ edges, it is in fact a tree.
From Observation~\ref{obs:same_eq_at_tree_game} we know that graphs $G$ and $G'$ yield the same equilibrium attack tree. Observation~\ref{obs:connection_smallest} states that every neighbour of $0$ in $G'$ is directly connected to $0$ in $T$, and every other node is connected in $T$ to his neighbour in $G'$ of the lowest index, which concludes the correctness of Algorithm~\ref{alg:equilibrium_attack_tree} when the input set of nodes is the set of non-neutral nodes.

The dominant procedure when considering the time complexity of Algorithm~\ref{alg:equilibrium_attack_tree} is finding the graph $G'$, which can be done in $O((n - |D|) \cdot n^2$).

\subsection{Finding the Lowest Node Index in $D$}
In this section, we consider game $\gamedef$, with a proper graph $G$, and show how to find the corresponding set of non-neutral nodes, $D$. Let $k^*$ denote the lowest index of a node in $D$. We formulate the condition which is satisfied by $k^*$ alone and test this condition for all the possible values of $k \in [n]$, finding $k^*$.
Using Algorithm~\ref{alg:equilibrium_attack_tree}, for every $k \in [n]$ we can find the equilibrium attack tree $T$ for graph $\network$ assuming that $D = \{k,k+1,\ldots,n\}$. 

For every $k \in [n]$ we define a function $F_k: [0,b_n] \rightarrow \mathbb{R}_{\geq 0}$, such that, for a given payoff $U$ of the attacker,
\begin{equation*}
    F_k(U) = \sum_{i} q_i^*(U,k),
\end{equation*}
where $q_i^*(U,k)$ is given by Equations \eqref{q_0} and \eqref{q_i}, for $D = \{k,k+1,\ldots,n\}$. 
$F_k$ has the following properties.
\begin{enumerate}
    \item It is strictly decreasing in $U$, as every element of the sum is strictly decreasing in $U$.
    \item $F_{k^*}(U^*) = 1$, where $U^*$ denotes the attacker payoff at the equilibrium and $k^*$ denotes the lowest index of a node attacked with positive probability in NE.
\end{enumerate}
The condition on $k^*$ is
\begin{equation}\label{inequality}
    F_{k^*}(b_{k^*}) \leq 1 < F_{k^*}(b_{{k^*}-1}),
\end{equation}
as it implies
\begin{equation} \label{condition}
    b_{k^*} \geq U^* > b_{{k^*}-1}.
\end{equation}

The set of first-order conditions \eqref{x_0}-\eqref{q_i} guarantees that the payoff to the attacker is the same for every pure strategy in the support. The payoff from every pure strategy outside of the support is not greater than $b_{k^*-1}$, hence it is smaller than $U^*$. This means, that the attacker cannot increase her payoff by changing her strategy. Neither can the defenders, as each one of them already maximizes his payoff. Therefore, the strategy profile $((q^*)_{i \in D}, (x_i)_{i \in D})$ defined by Equations \eqref{x_0}-\eqref{q_sum} describe the NE of equilibrium attack tree game  $(T(G, (b_i)_{i \in [n]}),$ $(b_i)_{i \in D},$ $(d_i)_{i \in D},$ $(c_i)_{i \in D})$ (which we know is unique from Theorem~\ref{th:NE_uniqueness}), with $D = \{k^*, k^* + 1, \ldots, n\}$.

\subsection{Calculating the NE of an Equilibrium Attack Tree Game for a Proper Graph}\label{NE}

Consider attack and defense game $\gamedef$, where graph $G$ is proper. To calculate the strategy profile that is the NE of the corresponding equilibrium attack tree game, knowing that $D = \{k^*, k^* + 1, \ldots, n\}$, we need to find the payoff of the attacker $U^*$. This means solving the equation $F_{k^*}(U) = 1$
\begin{equation}\label{U*calculation}
    \sum_{i \in D_0}  \frac{c_i'(1 - \frac{U}{b_i})}{d_i} + \sum_{i \in D \setminus D_0}\frac{b_{k(i)} \cdot c_i'(1 - \frac{b_{k(i)}}{b_i})}{U \cdot d_i} = 1.
\end{equation}

In the case of the cost functions $c_i(x_i)$ being of the form $c_i(x_i) = x_i^2/2$, Equation~(9) takes the form
\begin{equation*}
        \sum_{i \in D_0}  \frac{1 - \frac{U}{b_i}}{d_i} + \sum_{i \in D \setminus D_0}\frac{b_{k(i)} \cdot 1 - \frac{b_{k(i)}}{b_i}}{U \cdot d_i} = 1.
\end{equation*}
This equation can be transformed into a quadratic equation after multiplying both sides by $U$, and it can be solved in linear time with respect to the number of nodes.

After establishing the payoff of the attacker, the last thing to do is to calculate the solutions of equations 
\eqref{x_0}-\eqref{q_i}. Each equation can be solved in a constant time, as we only need to calculate the value of the function $x_i^2/2$ at a given point.

\section{Calculating the Strategies in the NE for any Graph}

Consider $\gamedef$, where graph $G$ is any connected graph. We showed how to find the set $D$ and the NE $((q_i^*)_{i \in D}, (x_i^*)_{i \in D})$ of the attack equilibrium tree game $(T(G, (b_i)_{i \in [n]}),$ $(b_i)_{i \in D},$ $(d_i)_{i \in D},$ $(c_i)_{i \in D})$ by first calculating the proper graph $G' = G \setminus L(G)$, then applying the method of finding the set of non-neutral nodes $D$ and finally calculating the NE, $((q_i^*)_{i \in D}, (x_i^*)_{i \in D})\}$, of the corresponding equilibrium attack tree game. In this section, we show how to retrieve a strategy profile $(\pi, (x_i)_{i \in [n]})$ that is a NE of an attack and defense game on network, $\gamedef$, from the NE $((q_i^*)_{i \in D}, (x_i^*)_{i \in D})$ of the corresponding equilibrium attack tree game.

Let $R(i,j, G, D)$ denote the set of all paths $p$ in $G$ from $i \in D$ to $j \in D$, that do not contain any other node from $D$. To find the set $R(i,j, G, D)$ we remove nodes in $D \setminus \{i,j\}$ from $G$. If $i$ and $j$ are in two different components then $R(i,j, G, D) = \emptyset$. In general, set $R(i,j, G, D)$ can contain (exponentially) many different paths, and therefore can be difficult to find, however, we can obtain any of these paths in time $O(n^2)$. We will denote such a path by $r_{ \{i,j\}}(G,D)$.

Considering an equilibrium attack path, $p^j \in P(T(G, (b_i)_{i \in [n]}))$, we create a path $p^{j,res} \in P(G)$ by replacing the edge between $m \in p^j$ and his predecessor $k(m) \in p^j$ with a path $r_{\{k(m),m\}}(G,D)$. From the reduction procedure, it follows that $r_{\{k(m),m\}}(G,D)$ exists for every such pair of nodes and it can be $\{m,k(m)\}$ if and only if $\{m,k(m)\} \in E(G)$. Let $\pi^* \in \Delta(P(G))$ be
\begin{equation*}
    \pi^*(p) =  
    \begin{cases}
        q_j^* \text{, if } p = p^{j,res}, \\
        0 \text{, otherwise}.
    \end{cases}
\end{equation*}

\begin{observation}
Strategy profile $(\pi^*, ((0)_{i \in [n] \setminus D},$ $(x_i^*)_{i \in D}))$ describes a NE of $\gamedef$.
\end{observation}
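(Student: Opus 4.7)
The plan is to lift the NE $((q_i^*)_{i\in D},(x_i^*)_{i\in D})$ of the equilibrium attack tree game to a NE of $\Gamma=\gamedef$ in two stages. First I would apply Lemma~\ref{lem:reduction} iteratively over the full set of neutral nodes $[n]\setminus D$, justified by Definition~\ref{def:set_reduction}, reducing the claim to showing that $(H^G_{[n]\setminus D}(\pi^*),(x_i^*)_{i\in D})$ is a NE of the reduced game on $G\setminus([n]\setminus D)$. By the lemma this then lifts back to the desired NE of $\Gamma$ in which every neutral defender plays $0$.

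Next I would verify that $H^G_{[n]\setminus D}(\pi^*)$ places mass $q_j^*$ on the equilibrium attack path $p^j$, viewed as a path of $G\setminus([n]\setminus D)$. This follows directly from the construction of $p^{j,res}$: each tree edge $\{k(m),m\}$ was replaced by a path $r_{\{k(m),m\}}(G,D)$ whose internal nodes are neutral, and applying the maps $h^G_m$ successively strips exactly these internal nodes. With a single attack path per target, defender $j$'s expected payoff in the reduced game coincides with his payoff in the equilibrium attack tree game, so $x_j^*$ remains a best response by the first-order condition~\eqref{eq:derivative} and concavity of $V_j$.

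The main obstacle is the attacker's best response in $G\setminus([n]\setminus D)$, where she can exploit paths that traverse non-tree edges. I would establish by induction on path length that every path in $G\setminus([n]\setminus D)$ yields payoff at most $U^*$. The base case is a one-edge path $(0,j)$ with $j\in D_0$, whose payoff is $b_j(1-x_j^*)=U^*$ by~\eqref{x_0}. For the inductive step, the payoff of $(0,v_1,\ldots,v_{m-1},v_m=j)$ targeting $j$ factors as the payoff of the prefix targeting $v_{m-1}$ multiplied by $b_j(1-x_j^*)/b_{v_{m-1}}$. If $j\in D\setminus D_0$ then $b_j(1-x_j^*)=b_{k(j)}$ by~\eqref{x_i}; since $v_{m-1}$ is adjacent to $j$ in the reduced graph we have $v_{m-1}\in N(j,D,G)$, and Observation~\ref{obs:connection_smallest} gives $b_{k(j)}\leq b_{v_{m-1}}$. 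If instead $j\in D_0$ then $b_j(1-x_j^*)=U^*\leq b_{k^*}\leq b_{v_{m-1}}$ by~\eqref{condition}. Either way the factor is at most $1$, so the inductive hypothesis propagates. Finally, each supported path $p^j$ achieves payoff exactly $U^*$ via the computation $\alpha_j=U^*/b_{k(j)}$ in~\eqref{eq:alpha_j_minimizing}, so the attacker is indifferent over the support of $H^G_{[n]\setminus D}(\pi^*)$ and has no profitable deviation.
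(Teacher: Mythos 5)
Your proposal is correct and takes essentially the paper's route: the paper proves the Observation by reversing the iterated reduction of Lemma~\ref{lem:reduction} over $[n]\setminus D$, using exactly your key fact that $H^G_{[n]\setminus D}(p^{j,res}) = p^j$, so the lifted profile with neutral defenders playing $0$ inherits the NE property from the reduced game. The only difference is that you verify explicitly, by induction over paths of the reduced graph, that the first-order-condition profile is a NE there (attacker payoff at most $U^*$ on every path, exactly $U^*$ on the support), whereas the paper relies for this step on its earlier citations of Observation~\ref{obs:connection_smallest}, condition~\eqref{condition}, and Theorem~\ref{th:NE_uniqueness} — a filling-in of detail rather than a different argument.
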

This follows from the game reduction by the set of neutral nodes, as by reversing the reduction of $G$ by set $D' = [n] \setminus D$, we can define the mapping $H^G_{D'}$ 
where $(H^G_{D'})^{-1}(p^j) = p^{j,res}$ 
for every node $j \in D$, as for every $j$, and every mapping $H^G_{D'}$, $H^G_{D'}(p^{j,res}) = p^j$. 

The procedure of reconstructing NE of the original game from the $((q_i^*)_{i \in D}, (x_i^*)_{i \in D})$ runs in time $O(|D| \cdot n^2)$, as it requires finding exactly $|D| \leq n$ paths $r_{i,j}(G,D)$. 
\section{Computational Complexity} \label{sec:complexity}

A NE of the attack and defense game $\gamedef$ can be found by the following procedure.
\begin{algorithm}[H]
 \hspace*{\algorithmicindent} \textbf{Input:} attack and defense game $\gamedef$ \\
 \hspace*{\algorithmicindent} \textbf{Output:} a NE of a given game 
\begin{algorithmic}[1]
\STATE find the set $L(\network)$ of all linkers in $G$ 
\STATE calculate the proper graph $\network' = \network \setminus L(\network)$ 
\FOR{$k \in \{1,\ldots,n\}$}
    \STATE calculate $\network_k'$ = $\network' \setminus \{1,\ldots,k-1\}$ 
    \STATE construct equilibrium attack tree $T_k$ for graph $\network_k'$ (Algorithm~\ref{alg:equilibrium_attack_tree})
    \STATE construct function $F_k(U)$ using the equations \eqref{q_0} and \eqref{q_i}
    \IF{inequality \eqref{inequality} holds}
        \STATE save $k^* = k$
        \STATE save $T_{k^*} = T_k$
        \STATE break
    \ENDIF  
\ENDFOR 
\STATE calculate $U^*$ solving \eqref{U*calculation}
\STATE calculate strategies of the attacker and the defenders solving the equations \eqref{x_0}, \eqref{x_i}, \eqref{q_0} and \eqref{q_i}
\STATE reconstruct the NE of the general game
\RETURN NE of the general game
\end{algorithmic}
\caption{Finding the NE of the general model}
\label{alg:solving}
\end{algorithm}
The pessimistic time complexities of all the used procedures were established when the given procedure was introduced.
The dominant operation of Algorithm~\ref{alg:solving} is the calculation of the reduced graph (line 4), which can require time $O(n^3)$ in every iteration of the main loop. Therefore, the pessimistic run time of Algorithm~\ref{alg:equilibrium_attack_tree} is $O(n^4)$.

Algorithm 2 can be easily generalized to compute Nash equilibria of the game for arbitrary cost functions (satisfying Assumptions~\ref{ass:marginal_cost}) other than $c_i(x) = x^2/2$. In the case of such cost functions, the pessimistic time cost of Algorithm~\ref{alg:solving} is polynomial with respect to the number of players as long as Equation~\eqref{U*calculation} can be solved in polynomial time.

\section{Conclusions} \label{sec:conlusions}

In this paper, we proposed a method for finding a NE of attack and defense games on networks~\cite{paper}. The proposed algorithm runs in polynomial time with respect to the number of nodes in the network. 

The idea of reducing the network by the set of linker nodes that results in a proper graph, although simple, allows us to make an important observation that every node in a proper graph can be reached by the attacker by at least one ascending path. This idea can be used for the subclass of attack and interception games on networks, where the defenders make their decisions independently and only the target node is influenced by the attack. However, if the defenders could coordinate, it could happen that a linker node, although never attacked, is protected to defend more valuable nodes that can be reached only through this linker. 

\subsubsection*{Acknowledgements}
This work was supported by the Polish National Science Centre through grant 2018/29/B/ST6/00174. 

\bibliographystyle{splncs04}
\bibliography{paper}

\end{document}